\newenvironment{floateq}{
\begin{figure*}[!t]
\normalsize

}{

\hrulefill
\vspace*{4pt}
\end{figure*}
}
\newcommand{\barpi}{\mathbf{\widehat{\text{$\Pi$}}}}
\newcommand{\ordpiv}{\mathbf{\accentset{\circ}{\text{$\Pi$}}}}
\newcommand{\eqnsize}{\normalsize}
\begin{document}

\allowdisplaybreaks[4]
\title{Multi-Access Communications with Energy Harvesting: A Multi-Armed Bandit 
Model and the Optimality of the Myopic Policy}

\author{\IEEEauthorblockN{Pol Blasco and Deniz G{\"u}nd{\"u}z} \\
\IEEEauthorblockA{Imperial College London,  UK\\
Emails: \{p.blasco-moreno12, d.gunduz\}@imperial.ac.uk}}

\maketitle
\begin{abstract}
A multi-access wireless network with $N$ transmitting nodes, each 
equipped with an energy harvesting (EH) device and a rechargeable battery of 
finite capacity, is studied. At each time slot (TS) a node is 
\emph{operative} with a certain probability, which may depend on the 
availability of data, or the state of its channel. The energy 
arrival process at each node is modelled as an independent two-state Markov 
process, such that, at each TS, a node either harvests one unit of energy, or none. At each TS a subset of the nodes is scheduled by the access point (AP). The 
scheduling policy that maximises the total throughput is studied assuming that the AP does not know the states of either the EH processes or the batteries. The 
problem is identified as a restless multi-armed bandit (RMAB) problem, and an 
upper bound on the optimal scheduling policy is found. Under certain assumptions 
regarding the EH processes and the battery sizes, the optimality of the myopic policy (MP) is proven. For the general case, the performance of MP is compared numerically to the 
upper bound. 
\end{abstract}
\begin{IEEEkeywords} 
Energy harvesting, myopic policy, multi-access, online scheduling, partially observable Markov decision process, restless multi-armed bandit problem.
\end{IEEEkeywords}

\section{Introduction} 
Low-power wireless networks, such as machine-to-machine and wireless sensor 
networks, can be complemented  with energy harvesting (EH) technology to extend 
the network lifetime. A  low-power wireless node has a limited lifetime constrained by the battery size; but when complemented with an EH device and a rechargeable battery, its lifetime can be prolonged significantly. However, energy availability at the EH nodes is scarce, and, due 
to the random nature of the energy sources, energy arrives at random times and in 
arbitrary amounts. Hence, in order to take the most out of the scarce energy, it is important to optimise the scheduling policy of the wireless network.

Previous research on EH wireless networks can be grouped into three, based on 
the information available regarding the random processes governing the system 
\cite{EH:Gunduz2014}. In the offline optimization framework, availability of 
non-causal information on the exact realizations of the random processes 
governing the system is assumed at the transmitter \cite{Yang:TC:10}, 
\cite{Devillers:JCN:11}. In the online optimization framework 
\cite{hj:Wang2012,EH:Aprem2013,hj:Lei2009,EH:Micheusi2013, 
Iannello2012a,P.Blasco2013,ehc:O.M.Gul2014,Blasco2013}, the statistics governing 
the random processes are assumed to be available at the transmitter, and their 
realizations are known only causally. The EH communication system is modeled as 
a Markov decision process (MDP) \cite{hj:Wang2012}, or as a partially observable 
MDP (POMDP) \cite{EH:Aprem2013}, and dynamic programming 
(DP)~\cite{Bertsekas2005} can be used to optimise the EH communication system 
numerically. In many practical applications, the state space of the 
corresponding MDPs and POMDPs is large, and DP becomes computationally 
prohibitive~\cite{Littman1995a}, and the numerical results of DP do not provide 
much intuition about the structure of the optimal scheduling policy. In order to 
avoid complex numerical optimisations it is important to characterize the 
behaviour of the optimal scheduling policy and identify properties about its 
structure; however, this is possible only in some special cases 
\cite{Iannello2012a,P.Blasco2013,hj:Lei2009}. In the learning optimization 
framework, the knowledge about the system behaviour is further relaxed, and even 
the statistical knowledge about the random processes governing the system is not 
assumed, and the optimal policy scheduling is learnt over 
time~\cite{Blasco2013}.

We study online scheduling of low-power wireless 
nodes by an access point (AP). The nodes are equipped with EH devices, and powered by 
rechargeable batteries. At each time slot (TS) a node is operative with a 
certain probability, which may depend on the channel conditions or the 
availability of data at the node. The EH process at each node is modelled as an 
independent Markov process, and at each TS, a node either harvests one unit of 
energy or does not harvest any. The AP is in charge of scheduling, at each TS,  
the EH nodes to the available orthogonal channels. A node transmits only when it is scheduled and is operative at the same time. Hence, at each TS the AP learns the EH process states and 
battery levels of the operative nodes that are scheduled, but does not receive 
any information about the other nodes.  The AP is interested in maximising the 
expected sum throughput within a given time horizon. This problem can be model 
as a POMDP and solved numerically using DP at the expense of a 
high-computational cost. Instead, we model it as a restless multi-armed bandit 
(RMAB) problem  \cite{RMAB:Whittle1988}, and prove the optimality of a 
low-complexity policy in two special cases. Moreover, by relaxing the constraint 
on the number of nodes that the AP can schedule at each TS, we obtain an upper 
bound on the performance of the optimal scheduling policy. Finally, the 
performance of the low complexity policy is compared to that of the upper bound 
numerically.  The main technical contributions of the paper are summarised as 
follows:
\begin{itemize}
 \item We show the optimality of a MP if the nodes do not harvest 
energy and transmit data at the same time, and the EH process is affected by the 
scheduling policy. 
 \item We show the optimality of MP if the nodes do not have 
batteries and can transmit only if they have harvested energy in the previous 
TS. 
 \item We provide an upper bound on the performance for 
the general case by relaxing the constraint on the number of nodes that can be 
scheduled at each~TS.
 \item We show numerically that MP performs close to the upper 
bound for the general case. 
\end{itemize}

The rest of this paper is organized as follows. Section \ref{sec:rel_work} is 
dedicated to a summary of the related literature. In 
Section~\ref{sec:system_model}, we present the  EH wireless multi-access network 
model. In Sections~\ref{sec:case1} and~\ref{sec:case2} we characterize 
explicitly the structure of the optimal policy that maximises the sum throughput 
for two special cases. In Section~\ref{sec:upp}, we provide an upper bound on 
the performance.  Finally, in Section~\ref{sec:numerical} we compare the 
performance of MP with that of the upperbound through numerical 
analysis. Section \ref{sec:con} concludes the paper.

\section{Related Work}\label{sec:rel_work}

There is a growing research interest in EH wireless communication systems, and 
in particular, in developing scheduling policies that exploit the scarce 
harvested energy in the most efficient manner. In large EH wireless networks, 
since numerical optimization is computationally prohibitive, it is important to 
characterise the optimal scheduling policy explicitly, or certain properties of it.

In \cite{hj:Lei2009}, the authors assume that the data packets arrive at the EH 
transmitter as a Poisson process, and each packet has an intrinsic value 
assigned to it, which also is a random variable. The optimal transmission policy 
that maximizes the average value of the received packets at the destination is 
proven to be a threshold policy. However, the values of the thresholds 
have to be computed using numerical techniques, such as DP or linear programming 
(LP). Reference~\cite{EH:Micheusi2013} extends the problem in~\cite{hj:Lei2009} 
to the multi-access scenario.

Multi-access in EH wireless networks with a central scheduler, static channels 
and backlogged nodes has been studied in  \cite{Iannello2012a, P.Blasco2013, 
ehc:O.M.Gul2014}. The central scheduler in \cite{Iannello2012a} does not 
know the battery levels or the states of the EH processes at the nodes. Assuming 
that the nodes have unit size batteries, the system is modeled as an RMAB, and 
MP, which has a round robin (RR) structure, is shown to maximise 
the sum throughput. Reference \cite{P.Blasco2013} considers nodes with batteries 
of arbitrary capacity, and  MP is found to be optimal in two special cases. In 
contrast to the present paper, \cite{P.Blasco2013} considers static channels and 
backlogged nodes, and the optimality proof exploits the RR structure of MP. In  
\cite{ehc:O.M.Gul2014}, considering infinite-capacity batteries, an asymptotically 
optimal policy is proposed. 

The problem studied in this paper is modeled as an RMAB problem. In the classic 
RMAB  problem there are several arms, each of which is modelled as a Markov 
chain \cite{RMAB:Whittle1988}. The states of the arms are unknown, and at each TS an 
arm is played. The played arm reveals its state and yields a reward, which is 
a function of the state. The objective is to find a policy that maximises the 
total reward over time. RMAB problems have been shown to be, in general, PSPACE 
hard \cite{RMAB:Papadimitriou1994}, and our knowledge on the structure of the 
optimal policy for general RMAB problem is limited. 

Recently, the RMAB model has been used to study channel access and cognitive 
radio problems, and new results on the optimality of MP have been obtained 
\cite{RMAB:Ahmad2009,RMAB:Ahmad2009a,RMAB:Mansourifard2012,RMAB:Wang2012,
RMAB:Liu2010}. The structure  and the optimality of MP  is proven in 
\cite{RMAB:Ahmad2009} and \cite{RMAB:Ahmad2009a} for single and multiple plays, 
respectively, under certain conditions on the Markov transition probabilities. 
In \cite{RMAB:Mansourifard2012} the optimality of MP is shown for a general 
class of monotone affine reward functions, which include arms with arbitrary 
number of states.  The optimality of MP is proven in \cite{RMAB:Wang2012}  when 
the arms' states follow non-identical Markov chains. The case of imperfect 
channel detection is studied in \cite{RMAB:Liu2010},  and MP is found to be 
optimal when the false alarm probability of the channel state detector is below 
a certain value. 

\section{System Model}\label{sec:system_model}

We consider an EH wireless network with $N$ EH nodes and one AP, as depicted in 
Figure~\ref{fig:system_model}. Time is divided into TSs of constant duration, 
and  the AP is in charge of scheduling $K$ of the $N$ nodes to the $K$ available 
orthogonal channels at each TS. A node is \emph{operative}  at each TS with a 
fixed probability $p$ independent over TSs and nodes, and 
\emph{inoperative}  otherwise. We consider that a node is in the operative state 
if it has a data packet to transmit in its buffer and its channel to the AP is 
in a good state, while it is inoperative otherwise even if it is scheduled to a 
channel. The EH process is modelled as a Markov chain, which can be either in 
the harvesting or in the non-harvesting state, denoted by states $1$ and $0$, 
respectively. We denote by $p_{ij}$ the transition probability from state $i$ to 
$j$, and assume that $p_{11}\geq p_{01}$, that is, the EH process is positively 
correlated in time, and hence, if the EH process is in state $i$, it is more 
likely to remain in state $i$ than switching to the other state. We denote by 
$E^s_{i}(n)$ and $E^h_{i}(n)$ the state of the EH process and the amount of 
energy harvested by node $i$, respectively, in TS $n$. The energy harvested in 
TS $n$ is available for transmission in TS $n+1$. We assume that one fundamental 
unit of energy is harvested when the Markov process makes a transition to the 
harvesting state, that is, $E_i^h(n)=E_i^s(n+1)$\footnote{Our results can be generalised to  a broader class of two-state Markovian EH processes in which the amount 
of energy harvested in each state is an independent and 
identically distributed random variable, and the expected amount of 
harvested energy in the harvesting state is larger than that in the 
non-harvesting state. However, the studied EH model captures the 
random nature of the energy arrivals, and is also considered in 
\cite{hj:Wang2012,Iannello2012a,P.Blasco2013,Blasco2013}. }. Each node is 
equipped with a battery of capacity $B$, and we denote by $B_{i}(n) \in 
\{0,\ldots,B\}$ the amount of energy stored in the battery of node $i$ at the 
beginning of TS $n$.    The state of node $i$ in TS $n$, $S_i(n)$,  is given by 
its battery and EH process states, $S_i(n)=(E^s_i(n),B_i(n))\in\{0,1\}\times\{0,\ldots,B\}$.  The system state is 
characterized by the joint states of all the nodes.

\begin{figure}
  \centering
  \includegraphics[width=0.45\textwidth]{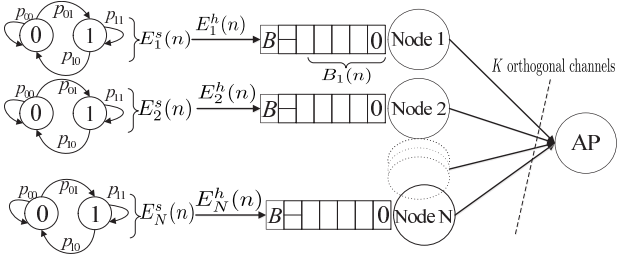}\\  
 \caption{System model with $N$ EH nodes with finite size batteries and $K$ orthogonal channels.}\label{fig:system_model}
\end{figure}

The system functions as follows: At the beginning of each TS, the AP schedules 
$K$ out of $N$ nodes, such that a single node is allocated to each orthogonal channel.  
When a node is scheduled, if it is operative in that TS, i.e., it has data to 
transmit and its channel is in a good state, it transmits a data packet as well 
as the current state of its EH process to the AP. If 
it is not operative it transmits a status beacon to the AP, and backs off. We 
say that a node is \emph{active} in a TS if it is scheduled by the AP and is 
operative; and hence, it transmits a data packet to the AP, otherwise we say 
that the node is \emph{idle} in this TS, that is, the node is not scheduled or 
it is scheduled, but it is not operative.  We denote by $\mathcal{K}(n)$ and  
$\mathcal{K}^a(n)$ the set of nodes scheduled by the AP, and the set of active 
nodes in TS $n$, respectively, where $\mathcal{K}^a(n)\subseteq 
\mathcal{K}(n)$. 

We assume that the transmission rate is a linear function of the transmit power, which is an accurate approximation in the low power regime. When the power-rate function is linear, the total 
number of bits transmitted to the AP is maximised when 
an active node transmits at a constant power throughout the TS, using all its 
energy. To simplify 
the notation we normalise the power-rate function such that the number of bits 
transmitted within a TS is equal to the energy used for 
transmission.  Then the expected throughput in TS $n$ is
\eqnsize\begin{equation}  \label{eq:throughput}
 R(\mathcal{K}(n))=\expected{~~\sum_{\mathclap{i\in 
\mathcal{K}^a(n)}}B_i(n)}{}=p  \sum_{\mathclap{i\in \mathcal{K}(n)}}B_i(n).
\end{equation}\normalsize
The objective of the AP is to schedule the best set of nodes, $\mathcal{K}(n)$, 
at each TS in order to maximize (\ref{eq:throughput}), without knowing which 
nodes are operative, the battery levels, or the EH states. The only information 
the AP receives is the EH state of the active nodes at each TS. Note 
that the AP also knows the battery state of the active nodes after transmission 
since they use all their energy. 

A scheduling policy is an algorithm that schedules nodes at each TS $n$, based 
on the previous observations of the EH states and battery levels. The objective 
of the AP is to find the scheduling policy $\mathcal{K}(n)$, $\forall n \in 
[1,T]$, that maximizes the total discounted throughput, given by
\eqnsize
\begin{equation}   \label{eq:opt_problemv1}
\begin{aligned}
\max_{\{\mathcal{K}(n)\}_{n=1}^{T}} &~~ \sum_{n=1}^T \beta^{n-1} 
R(\mathcal{K}(n)),\\
\text{s.t. } &  B_{i}(n+1) =\min \{B_{i}(n) \\
&+ E^h_{i}(n),B \} \cdot \mathds{1}_{i \notin \mathcal{K}^a(n)}
+ E^h_{i}(n) \cdot \mathds{1}_{i \in \mathcal{K}^a(n)},
\end{aligned}
\end{equation}
\normalsize
where $0 <\beta  \leq1$ is the discount factor, and $\mathds{1}_{a}$ is  the 
indicator function, defined as  $\mathds{1}_{a}=1$ if $a$ is true, and 
$\mathds{1}_{a}=0$, otherwise. 

If the AP is informed on the current state of all the nodes at each TS, 
the problem would be formulated as an MDP, and  solved  using LP or DP 
\cite{Bertsekas2005}. However, in practice transmitting all the nodes' states to 
the AP introduces further overhead and energy consumption; and hence, is not 
considered here. Accordingly, the appropriate model for our problem is a 
POMDP. It can be shown that a sufficient statistic for optimal decision making 
in a POMDP is given by the conditional probability of the system states given 
all the past actions and observations, which, in our problem, depends only on 
the number of TSs each node has been idle for, and on the realisation of each 
node's EH state last time it was active. Hence, we can reformulate the POMDP 
into an equivalent MDP with an extended state space. The belief states, that is, 
the states in the equivalent MDP, are characterized by all the past actions and 
observations. We denote by $l_i$ and $h_i$ the number of TSs that node $i$ has 
been idle for, and the state of the EH process the last time it was active, 
respectively. The belief state of node $i$, $s_i(n)$, is given by 
$s_i(n)=(l_i,h_i)$, and the belief state of the whole system is the joint belief 
states of all the nodes. In TS $n$, the belief state of node~$i$ is updated as 
$s_i(n+1)=(0,E^s_i(n))$,  if $i \in \mathcal{K}^a(n)$, and  as 
$s_i(n+1)=(l_i+1,h_i)$, otherwise.   That is, at each TS,  $l_i$ is set to $0$ 
if node~$i$ is active, and increased by one if it is idle. In principle, since 
the number of TSs a node can be idle is unbounded, the state space of the 
equivalent MDP is infinite, and hence, the POMDP in (\ref{eq:opt_problemv1}) is 
hard to solve numerically. In Sections~\ref{sec:case1}~and~\ref{sec:case2}, we focus on 
two particular settings, and show the existence of optimal low-complexity scheduling policies
under certain assumptions.

\section{Non Simultaneous Energy Harvesting and Data 
Transmission}\label{sec:case1}

In this section we assume that the nodes are not able to harvest energy and 
transmit data simultaneously,  and that if node~$i$ is active in TS~$n-1$, then 
its EH state in TS~$n$, $E_i^s(n)$, is either $0$ or $1$ with probabilities 
$e_0$ and $e_1$, respectively, independent of the EH state in TS~$n-1$, where 
$e_0 \leq \frac{p_{10}}{p_{01}+p_{10}}$. These assumptions may account for nodes 
equipped with electromagnetic energy harvesters in which the same antenna is 
used for harvesting as well as transmission; and hence, it is not possible to 
transmit data and harvest energy simultaneously, and the RF hardware has to be 
reset into the harvesting mode after each transmission.

Since the EH process is reset when a node 
transmits, the EH process states of active nodes are not relevant. As a 
consequence, the belief state of a node, $s_i(n)$, is characterized only by the 
number of TSs the node has been idle for, $l_i$. There is a one-to-one 
correspondence between $l_i$ and the expected battery level of 
node $i$; therefore, we redefine the belief state, $s_i(n)$, as the expected 
battery level of node $i$ in TS $n$, normalised by the battery capacity. The 
expected throughput in~(\ref{eq:throughput}) can be rewritten as 
\eqnsize\begin{equation}  \label{eq:reward}
 R(\mathcal{K}(n))=p B \sum_{\mathclap{i\in \mathcal{K}(n)}}s_i(n).
\end{equation}\normalsize
Notice that $s_i(n)$ in \eqref{eq:reward} is normalised, i.e., $s_i(n)\in[0,1]$. 


Due to the Markovity of the EH processes, the future belief state is only 
a function of the current belief state and the scheduling policy.  If a node is 
active in TS $n$, since it uses all its energy and does 
not harvest any, the belief state is set to $0$ in TS $n+1$. If a node is not 
active in TS~$n$, then the belief state evolves according to the belief state 
transition function $\tau(\cdot)$. The belief state of node $i$ in TS $n+1$ is 
 \eqnsize\begin{equation} \label{eq:believe_state} 
 s_i(n+1)=\left \{ \begin{array}{lll}
\tau(s_i(n)) & \text{ if~}  i \notin \mathcal{K}^a(n), \\
0 	     & \text{ if~}  i \in \mathcal{K}^a(n).  \\
\end{array} \right .
\end{equation} \normalsize

\begin{property}\label{prop1}
The belief state transition function, $\tau(\cdot)$, is a monotonically 
increasing contracting map, that is,  $\tau(s_i(n))>\tau(s_j(n))$ if  
$s_i(n)>s_j(n)$,  and $ \|\tau(s_i(n)) -\tau(s_j(n)) \| \leq \|s_i(n)-s_j(n)\|$.
\end{property}
\begin{proof}
The proof is given in Appendix~\ref{app:contracting_map}.
 \end{proof}

Note that the assumption $p_{11} \geq p_{01}$ is a necessary condition for 
Property~1.
We denote by $\mathbf{s}(n)=(s_1(n), \ldots, s_N(n))$ the belief vector in TS 
$n$, which contains the belief states of all the nodes, and by  
$\mathbf{s}_\mathcal{E}(n)$ the belief vector of the nodes in set $\mathcal{E}$. 
For the sake of clarity we drop the $n$ from $\mathbf{s}(n)$ and  
$\mathbf{s}_\mathcal{E}(n)$  when the time index is clear from the context.  We 
denote the expected throughput by $R(\mathbf{s}_\mathcal{E})$ if the belief 
vector is $\mathbf{s}$ and nodes in $\mathcal{E}$ are scheduled.
 
The probability that a particular set of nodes, $\mathcal{K}^a(n) \subseteq 
\mathcal{K}(n)$, is active while the rest of the scheduled nodes remain idle in 
TS $n$ is a function of the cardinality of $\mathcal{K}^a(n)$ and the 
probability that a node is operative, $p$. For $a\triangleq|\mathcal{K}^a(n)|$ 
we denote this probability by 
\eqnsize\begin{equation}  \label{eq:kchannels}
q(a,K)\triangleq (1-p)^{K-a}p^{a}.
\end{equation} \normalsize

The AP is interested in finding the scheduling policy $\pi$, which
schedules the nodes according to $\mathbf{s}(n)$, that is 
$\mathcal{K}(n)=\pi(\mathbf{s}(n))$, such that the expected throughput over 
the time horizon $T$ is maximised. The associated optimization problem is 
expressed through the Bellman value functions,
\eqnsize\begin{equation}  \label{eq:val}
\begin{array}{rcl}
V_n^\pi(\mathbf{s})&=&  R(\mathbf{s}_{\pi(\mathbf{s})}) +\beta 
\displaystyle\sum_{\mathclap{\mathcal{E}\subseteq \pi(\mathbf{s})}} 
q(|\mathcal{E}|,K) \\
&&  \times V_{n+1}^\pi( (s_1(n+1),\ldots,s_j(n+1)=0, \\
&& \ldots,s_i(n+1)=\tau(s_i(n)), \ldots)), 
\end{array}
\end{equation} \normalsize
where the sum is over all possible sets of active nodes, $\mathcal{E}$, among 
the scheduled nodes, $\mathcal{K}(n)=\pi(\mathbf{s}(n))$, and nodes $j$ and $i$ 
are active and idle, respectively.  The optimal policy, $\pi^*$, is the one that 
maximises (\ref{eq:val}).

\subsection{Definitions}

\begin{defn} 
At TS $n$ the \textbf{myopic policy (MP)} schedules the $K$ nodes that maximise the expected instantaneous reward function, $R(\cdot)$. For the reward function in 
(\ref{eq:reward}) the MP schedules the $K$ nodes with the highest belief states. 
 \end{defn}

MP schedules the nodes similarly to a round robin (RR) policy that orders the 
nodes according to the time they have been idle for, and at each TS schedules 
the nodes with the highest idle time values. If a node is active in this TS, it 
is sent to the bottom of this ordered list in the next TS. If a node  is idle it 
moves forward in the order. Notice that  due to the monotonicity of 
$\tau(\cdot)$ the order of the idle nodes is preserved.

We denote by $\mathbf{s}_\Pi=(s_{\Pi(1)},\ldots,s_{\Pi(N)})$, the permutation of 
the vector $\mathbf{s}$, where $\Pi(\cdot)$ is a permutation function, by  
$\mathbf{s}_\Pi^{K}=(s_{\Pi(1)},\ldots,s_{\Pi(K)})$ the vector containing the 
first $K$ elements of $\mathbf{s_\Pi}$, and by $\mathcal{S}_\Pi^{K}=\{\Pi(1), 
\cdots, \Pi(K) \}$ the set of indices of the nodes in positions from $1$ to $K$ 
in vector  $\mathbf{s}_\Pi$. We say that a vector is ordered if its elements are 
in decreasing order. We denote by $\ordpiv$ the permutation that orders a 
vector, that is,  the vector $\mathbf{s}_\ordpiv$ is ordered, i.e., 
$s_{\ordpiv(1)}\geq s_{\ordpiv(2)}\geq \ldots \geq s_{\ordpiv(N)}$. We denote 
the vector operator that first orders the vector $\mathbf{s}_\mathcal{E}$ of 
$|\mathcal{E}|$ components, and then applies $\tau(\cdot)$ to each of the 
components of the resulting vector by 
$\mathrm{T}(\mathbf{s}_\mathcal{E})\triangleq (\tau(s_{\ordpiv(1)}),\cdots,\tau(s_{
\ordpiv(|\mathcal{E}|)}))$, with $\ordpiv(i) \in \mathcal{E}, 1 \leq i \leq 
|\mathcal{E}|$.  Note that due to the monotonicity of $\tau(\cdot)$ the vector 
$\mathrm{T}(\mathbf{s}_\mathcal{E})$ is always ordered. Finally, we denote the 
zero vector of length $k$ by $\mathbf{0}(k)$.

\begin{defn} \textit{Pseudo value function}, $W_n(\mathbf{s}_\Pi)$, is defined as
 \eqnsize\begin{equation}  \label{eq:pseudo_val}
\begin{array}{l}
W_n(\mathbf{s}_\Pi)\triangleq  R(\mathbf{s}_\Pi^{K}) +\beta \displaystyle 
\sum_{\mathclap{\mathcal{E}\subseteq \mathcal{S}_\Pi^{K}}}  q(|\mathcal{E}|,K) 
W_{n+1}(\left [\mathrm{T}(\mathbf{s}_{\overline{\mathcal{E}}})~ , 
\mathbf{0}(|\mathcal{E}|) \right ]), \\
W_T(\mathbf{s}_\Pi)\triangleq   R(\mathbf{s}_\Pi^{K}),
\end{array} 
\end{equation}\normalsize
where $[\cdot, \cdot]$ is the vector concatenation operator. 

$W_n(\cdot)$ is  characterized solely by the belief vector 
$\mathbf{s}$ and its initial permutation $\Pi$. In TS $n$, the first $K$ nodes 
according to permutation $\Pi$ are scheduled, and the nodes are scheduled 
according to MP thereafter.  The belief vector in TS $n+1$ is  
$\mathbf{s}_\ordpiv(n+1)= \left 
[\mathrm{T}(\mathbf{s}_{\overline{\mathcal{E}}}), 
\mathbf{0}(|\mathcal{E}|)\right]$, where $\mathcal{E}$ is the set of active 
nodes in TS $n$, and, since $\mathrm{T}(\cdot)$ implicitly orders the output 
vector, $\mathbf{s}_\ordpiv(n+1)$ is ordered. Hence, the nodes that are active 
in TS $n$  have belief state $0$  in TS $n+1$, and are moved to the rightmost 
position in the belief vector.  If vector $\mathbf{s}_\Pi$ is ordered, 
(\ref{eq:pseudo_val}) corresponds to the value function of MP, that is, 
corresponds to (\ref{eq:val}) where $\pi$ is MP. 
\end{defn}

\begin{defn}
A permutation $\Pi$ is an \textit{$i,j$-swap} of permutation $\barpi$ if 
$\Pi(k)=\barpi(k)$, for $\forall{k}\neq \{i,j\}$, and $\Pi(j)=\barpi(i)$ and 
$\Pi(i)=\barpi(j)$. That is, all the nodes but those in positions $i$ and $j$ 
are in the same positions in $\mathbf{s}_\Pi$ and $\mathbf{s}_\barpi$, and the 
nodes in positions $i$ and $j$ are swapped. 

A permutation $\Pi$ is an \textit{$i,j$-swap} if $\Pi(k)=k$, for $\forall k\neq 
\{i,j\}$, and $\Pi(i)=j$ and $\Pi(j)=i$. That is, all the nodes but those in 
positions $i$ and $j$ are in the same position in $\mathbf{s}$ and 
$\mathbf{s}_\Pi$, and the nodes in positions $i$ and $j$ are swapped.

\end{defn}

\begin{defn} A function $f(\mathbf{x})$, 
$f:\mathds{R}^k\rightarrow \mathds{R}$ and $\mathbf{x}=(x_1,\ldots,x_k)$, is 
said to be \emph{regular} if it is symmetric, monotonically increasing, and 
decomposable \cite{RMAB:Wang2012}.
\begin{itemize}
  \item $f(\mathbf{x})$ is \textit{symmetric} if   $f(\ldots, x_i, \ldots, 
x_j,\ldots)=f(\ldots, x_j, \ldots, x_i,\ldots)$.
  \item $f(\mathbf{x})$ is \textit{monotonically increasing} in each of its 
components, that is, if $x_j \leq {\tilde{x}_j}$ then   $f(\ldots, x_j, 
\ldots)\leq f(\ldots, {\tilde{x}_j}, \ldots)$.
  \item $f(\mathbf{x})$ is \textit{decomposable} if  $f(\ldots, x_j, 
\ldots) = x_j f(\ldots, 1, \ldots)+(1- x_j)f(\ldots, 0, \ldots)$.
   \end{itemize}
\end{defn}
\begin{defn}(Boundedness) A function $f(\mathbf{x})$, $f:\mathds{R}^k\rightarrow 
\mathds{R}$ and $\mathbf{x}=(x_1,\ldots,x_k)$, is said to be \emph{bounded} if 
$\Delta_{l}\! \leq \!f(\ldots, 1, \ldots)\!-\!f(\ldots, 0, \ldots)\!\leq\! 
\Delta_{u}$.
\end{defn}

We note that the expected throughput $R(\cdot)$ is a linear function of the 
belief vector, which has  bounded elements, and all the nodes that are 
scheduled have the same coefficient; hence, $R(\cdot)$ is a bounded regular 
function. The pseudo value function, $W_n(\cdot)$, is symmetric, that is, 
\eqnsize\begin{equation}  \label{eq:Wsymmetric}
W_n(\mathbf{s}_\Pi)=W_n(\mathbf{s}_\barpi),
\end{equation} \normalsize
where $\Pi$ is a $i,j$-swap permutation of  $\barpi$, and $j,i \leq K $ or $j, i 
> K$. To see this we can use the symmetry of $R(\cdot)$, and the fact that 
$\mathrm{T}(\cdot)$ orders the belief vector in decreasing order. 

\subsection{Proof of the optimality of MP}

We prove the optimality of MP under the assumptions that 
$\tau(\cdot)$ is a monotonically increasing contracting map\footnote{Our results 
can also be applied to the case in which the state transition function is a 
monotonically increasing contracting map with parameter $\alpha$, that is, 
$\tau(s_i(n))>\tau(s_j(n))$ if  $s_i(n)>s_j(n)$,  and $ \|\tau(s_i(n)) 
-\tau(s_j(n)) \| \leq \alpha \|s_i(n)-s_j(n)\|$, if $0 \leq \alpha\cdot\beta 
\leq 1$. }, and $R(\cdot)$ is a bounded regular function. Hence, the 
results in this section  can be applied to a boarder class of EH processes and 
reward functions than those studied in this paper.  

The proof is structured as follows: Lemma 
\ref{lem:opt1} gives sufficient conditions for the optimality of  MP in TS $n$, 
given that  MP is optimal from TS $n+1$ onwards.  In Lemma~\ref{lem:upb} we show 
that the difference between the pseudo value functions of two different 
vectors is bounded. In particular, we bound the difference between the value 
functions of two belief vectors $\mathbf{s}_\ordpiv$ and 
$\tilde{\mathbf{s}}_{\ordpiv}$, which are both ordered, and differ only for 
the belief state of node $i$.  In Lemma~\ref{lem:swap} we show that, under 
certain conditions,  the sufficient conditions for the optimality of MP given in 
Lemma~\ref{lem:opt1} hold. 


\begin{lem}\label{lem:opt1} 
Assume that MP is optimal from  TS $n+1$ until TS $T$. A sufficient condition 
for the optimality of MP in TS $n$ is
\eqnsize\begin{equation} 
W_n(\mathbf{s})\geq W_n(\mathbf{s}_\Pi),  \label{eq:lem1}
\end{equation}  \normalsize
for any $\Pi$ that is an $i,j$-swap, with $ s_{j}\geq s_{i}$ and  $j\leq i$.
\end{lem}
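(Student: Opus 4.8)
The plan is a backward induction in which this lemma serves as the inductive step. First I would unwind what ``MP optimal in TS $n$'' means: since MP is assumed optimal from TS $n+1$ through $T$, the optimal action at TS $n$ for a belief vector $\mathbf{s}$ is to pick the $K$-subset of nodes which, followed by MP, yields the largest continuation value. By the definition of the pseudo value function in \eqref{eq:pseudo_val} and the accompanying discussion, scheduling the leading $K$ nodes of a permutation $\Pi$ and running MP thereafter yields exactly $W_n(\mathbf{s}_\Pi)$, and by the symmetry property \eqref{eq:Wsymmetric} this value depends only on the leading set $\mathcal{S}_\Pi^K$. Hence the optimal value at TS $n$ is $\max_\Pi W_n(\mathbf{s}_\Pi)$, and since MP realises the ordered permutation $\ordpiv$, it suffices to prove
\[
W_n(\mathbf{s}_\ordpiv)\ \geq\ W_n(\mathbf{s}_\Pi)\qquad\text{for every permutation }\Pi .
\]

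The core step is an exchange / bubble-sort argument: transform an arbitrary $\Pi$ into $\ordpiv$ through a finite chain of pairwise transpositions along which $W_n$ never decreases. If $\mathbf{s}_\Pi$ is not ordered it has an inversion, i.e.\ two (say adjacent) positions $j<i$ with the smaller value in the earlier slot $j$; let $\mathbf{u}$ be the belief vector obtained from $\mathbf{s}_\Pi$ by exchanging these two entries, so $u_j\geq u_i$. Then $\mathbf{s}_\Pi=\mathbf{u}_\Sigma$ for the $i,j$-swap $\Sigma$, and applying the hypothesis \eqref{eq:lem1} to $\mathbf{u}$ and $\Sigma$ gives $W_n(\mathbf{u})\geq W_n(\mathbf{s}_\Pi)$. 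Iterating on $\mathbf{u}$ removes inversions one at a time and terminates at $\mathbf{s}_\ordpiv$, which proves the displayed inequality; transpositions exchanging two positions that both lie inside the leading $K$-block or both outside it leave $W_n$ unchanged by \eqref{eq:Wsymmetric}, so they are harmless. This closes the induction, the base case $n=T$ being immediate because $W_T(\mathbf{s}_\Pi)=R(\mathbf{s}_\Pi^K)$ is the sum reward \eqref{eq:reward}, maximised by scheduling the $K$ largest belief states.

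The point requiring care --- and the only real obstacle --- is that the hypothesis \eqref{eq:lem1} has to be invoked at every intermediate, only partially sorted (hence generally non-ordered) belief vector $\mathbf{u}$, so it should be read as holding for every belief vector together with every qualifying swap; and one must verify at each step that the transposition used is, after relabelling, genuinely an $i,j$-swap with $j\leq i$ and $s_j\geq s_i$. The remaining ingredients --- identifying $\max_\Pi W_n(\mathbf{s}_\Pi)$ with the optimal value via the induction hypothesis, using symmetry to collapse permutations with a common leading set, and the fact that finitely many adjacent transpositions order any vector --- are routine.
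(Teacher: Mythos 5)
Your proof is correct and follows essentially the same route as the paper's: reduce optimality at TS $n$ (given MP optimal from $n+1$ on) to comparing pseudo value functions, then pass from an arbitrary scheduling permutation to the ordered one via a cascade of swaps each justified by \eqref{eq:lem1}, with \eqref{eq:Wsymmetric} handling swaps inside the scheduled or unscheduled blocks. Your explicit remark that \eqref{eq:lem1} must be read as holding for every (generally unordered) intermediate belief vector is a point the paper leaves implicit, but the argument is the same.
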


\begin{proof}
To prove that a policy is optimal,  we need to show that it 
maximizes~(\ref{eq:val}). By assumption MP is optimal from TS $n+1$ onwards; and 
hence, it is only necessary to prove that scheduling any set of nodes and 
following MP thereafter is no better than following MP directly in TS $n$. The 
value function corresponding to the latter policy is 
$W_n([\mathbf{s}_\mathcal{O},\mathbf{s}_{\overline{\mathcal{O}}}])$, where 
$\mathbf{s}_\mathcal{O}$ contains the $K$ nodes with the highest belief states 
in $\mathbf{s}$, and $\mathbf{s}_{\overline{\mathcal{O}}}$ contains the rest of 
the nodes not necessarily ordered. The value function corresponding to the 
former policy is 
$W_n([\mathbf{s}_{\mathcal{U}},\mathbf{s}_{\overline{\mathcal{U}}}])$, where 
$\mathbf{s}_{\mathcal{U}}$ contains the $K$ nodes scheduled in TS $n$, and 
$\mathbf{s}_{\overline{\mathcal{U}}}$ is the set of the remaining nodes. There 
exist at least a pair of nodes $s_i$ and $s_j$ such that, $j\in 
{\overline{\mathcal{U}}}$ and $j \notin {\overline{\mathcal{O}}}$, $i \in 
{\mathcal{U}}$ and $i \notin {\mathcal{O}}$, and $s_j \geq s_i$. By swapping 
each pair of such nodes, that is, swapping $j\in {\overline{\mathcal{U}}}$ for 
$i \in {\mathcal{U}}$, we can obtain 
$W_n([\mathbf{s}_\mathcal{O},\mathbf{s}_{\overline{\mathcal{O}}}])$ from 
$W_n([\mathbf{s}_{\mathcal{U}},\mathbf{s}_{\overline{\mathcal{U}}}])$ through a 
cascade of inequalities  using (\ref{eq:lem1}). Accordingly, 
$W_n([\mathbf{s}_\mathcal{O},\mathbf{s}_{\overline{\mathcal{O}}}])$ is an upper 
bound for any 
$W_n([\mathbf{s}_{\mathcal{U}},\mathbf{s}_{\overline{\mathcal{U}}}])$, and, 
hence,~MP~is~optimal. 
\end{proof}

Lemma~\ref{lem:opt1} shows that, 
 under certain conditions, the optimality of MP can be established through the 
pseudo value function. In particular, under the conditions of 
Lemma~\ref{lem:opt1}, if swapping a node in the belief vector with another node 
with a lower position and a lower belief state does not decrease the pseudo 
value function, then MP is optimal.  
 
\begin{lem} \label{lem:upb} Consider a pair of belief vectors $\mathbf{s}$ and 
$\tilde{\mathbf{s}}$, which differ only in one element, that is, 
$s_i=\tilde{s}_i$ for $\forall i \neq j$ and $s_j \geq \tilde{s}_j$. If  
$R(\cdot)$ is a bounded regular function, $\tau(\cdot)$ a monotonically 
increasing contracting map, and $ \beta \leq 1$, then we have  
\eqnsize\begin{equation}  \label{eq:upb}
W_n(\mathbf{s}_\ordpiv)- W_n(\tilde{\mathbf{s}}_\ordpiv)\leq 
\Delta_{u}(s_j-{\tilde{s}_j})u(n), 
\end{equation} \normalsize
where $u(n)\triangleq \displaystyle \sum_{i=0}^{T-n} (\beta(1-p))^i$.
\end{lem}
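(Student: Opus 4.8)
The plan is to proceed by backward induction on the time slot $n$, since the pseudo value function $W_n(\cdot)$ is defined recursively from $W_T(\cdot)$. The base case $n=T$ is immediate: $W_T(\mathbf{s}_\ordpiv)-W_T(\tilde{\mathbf{s}}_\ordpiv)=R(\mathbf{s}_\ordpiv^{K})-R(\tilde{\mathbf{s}}_\ordpiv^{K})$, and since $R(\cdot)$ is bounded regular and the two ordered vectors differ only in the $j$-th coordinate, decomposability gives a difference of exactly $\Delta_u^R(s_j-\tilde s_j)$ if position $j\le K$ (where $\Delta_u^R$ is the relevant coefficient, bounded by $\Delta_u$) and $0$ otherwise; in either case this is at most $\Delta_u(s_j-\tilde s_j)=\Delta_u(s_j-\tilde s_j)u(T)$ since $u(T)=1$.

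For the inductive step, assume the bound holds at slot $n+1$ for any pair of ordered vectors differing in a single coordinate. Expanding $W_n(\mathbf{s}_\ordpiv)$ and $W_n(\tilde{\mathbf{s}}_\ordpiv)$ via \eqref{eq:pseudo_val} and subtracting, the instantaneous-reward terms contribute at most $\Delta_u(s_j-\tilde s_j)$ as in the base case. For the continuation terms, I would split on whether $j\le K$ or $j>K$. If $j>K$, node $j$ is never scheduled in slot $n$, so for every active set $\mathcal{E}\subseteq\mathcal{S}_\ordpiv^{K}$ the two successor vectors $[\mathrm{T}(\mathbf{s}_{\overline{\mathcal E}}),\mathbf{0}(|\mathcal E|)]$ and $[\mathrm{T}(\tilde{\mathbf{s}}_{\overline{\mathcal E}}),\mathbf{0}(|\mathcal E|)]$ still differ only in the coordinate carrying node $j$, whose two values are $\tau(s_j)$ and $\tau(\tilde s_j)$ with $\tau(s_j)-\tau(\tilde s_j)\le s_j-\tilde s_j$ by the contracting-map property of $\tau$; applying the induction hypothesis slot-by-slot and using $\sum_{\mathcal E\subseteq\mathcal S_\ordpiv^K}q(|\mathcal E|,K)=1$, the continuation contributes at most $\beta(s_j-\tilde s_j)u(n+1)$. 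If $j\le K$, then for each $\mathcal E$ either $j\in\mathcal E$ (node $j$ active: both successors have a $0$ in that slot, so they are \emph{identical} and contribute nothing) or $j\notin\mathcal E$ (node $j$ idle: successors differ only in the coordinate $\tau(s_j)$ vs.\ $\tau(\tilde s_j)$, handled as before). Summing the non-identical cases, the probability weight of the active sets with $j\notin\mathcal E$ is $\sum_{a=0}^{K-1}\binom{K-1}{a}q(a,K)=(1-p)\sum_{a=0}^{K-1}\binom{K-1}{a}p^a(1-p)^{K-1-a}=1-p$, so the continuation contributes at most $\beta(1-p)(s_j-\tilde s_j)u(n+1)$. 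Combining reward and continuation terms and using $1+\beta(1-p)u(n+1)=u(n)$ closes the induction.

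A subtlety to address carefully is that after applying $\mathrm{T}(\cdot)$ the positions of the nodes, and in particular of node $j$, may change, and moreover $\mathbf{s}$ and $\tilde{\mathbf{s}}$ could sort differently; I would argue that since $s_j\ge\tilde s_j$ and $\tau$ is monotone, the relative order of all \emph{other} nodes is identical in both successor vectors, node $j$ lands in a position $\ge$ its position in the $\tilde{}$ vector, and the induction hypothesis can be applied along a short chain of single-coordinate modifications to absorb any ordering discrepancy while only improving the bound. The main obstacle is precisely this bookkeeping — ensuring that after the $\mathrm{T}(\cdot)$ operation we are still comparing two \emph{ordered} vectors differing in \emph{one} coordinate so that the induction hypothesis applies verbatim, and that the worst case is the ``always idle'' branch which is what produces the geometric factor $(1-p)$ per step and hence the sum $u(n)$.
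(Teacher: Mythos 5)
Your overall strategy coincides with the paper's: backward induction on $n$, the observation that when node $j$ is active both successor vectors are identical (so those branches cancel), the contraction property to pass from $\tau(s_j)-\tau(\tilde{s}_j)$ to $s_j-\tilde{s}_j$, the weight $1-p$ on the ``$j$ scheduled but idle'' branches, and the recursion $u(n)=1+\beta(1-p)u(n+1)$. The two ``aligned'' cases you treat ($j$ in the top $K$ of both ordered vectors, or in neither) are handled exactly as in the paper, and your computation of the $(1-p)$ weight is correct.

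The gap is the mixed case, which your explicit case split does not cover and which you defer to the closing paragraph as ``the main obstacle.'' Since $s_j\geq\tilde{s}_j$ and all other entries coincide, node $j$ can sit in the top $K$ of $\mathbf{s}_{\ordpiv}$ but \emph{not} of $\tilde{\mathbf{s}}_{\ordpiv}$ (the reverse is impossible), so the scheduled \emph{sets} differ by one node. Your base-case claim that the difference is ``$0$ otherwise'' is wrong here: if $m$ is the node bumped in/out, with $s_j\geq s_m\geq\tilde{s}_j$, the reward difference is $\Delta_u(s_j-s_m)$, not zero — it still obeys the bound, but only because $s_j-s_m\leq s_j-\tilde{s}_j$, which must be said. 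In the inductive step the same misalignment means the two expansions of $W_n$ condition on different scheduled sets and the branch-by-branch pairing you rely on breaks down. The paper resolves this with a telescoping step you only gesture at: insert the intermediate vector in which node $j$'s belief is replaced by $s_m$ (so that one copy of $s_m$ is scheduled and one is not), write $W_n(\mathbf{s}_{\ordpiv})-W_n(\tilde{\mathbf{s}}_{\ordpiv})$ as a sum of two single-coordinate differences, apply the ``both scheduled'' case to the first (gap $s_j-s_m$) and the ``neither scheduled'' case to the second (gap $s_m-\tilde{s}_j$), and add the bounds to get $\Delta_u(s_j-\tilde{s}_j)u(n)$. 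Your ``short chain of single-coordinate modifications'' is the right instinct, but without identifying $m$ and this decomposition the argument is incomplete; the concern you do elaborate (ordering of the successor vectors after $\mathrm{T}(\cdot)$) is actually harmless, because the induction hypothesis is stated for the \emph{ordered} versions of any two vectors differing in one node's belief, so it applies regardless of where node $j$ lands.
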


\begin{proof}
See Appendix~\ref{app:lub}. 
\end{proof}
The result of Lemma~\ref{lem:upb} establishes that increasing the belief state 
of a node $j$ from ${\tilde{s}_j}$ to $s_j$ may increase the value of the pseudo 
value function, which is bounded by a linear function of the increase in the  
belief, $s_j-\tilde{s}_j$, and the function $u(n)$, which decreases with $n$ and 
corresponds to the maximum accumulated loss from TS $n$ to TS $T$.

\begin{lem}  \label{lem:swap} Consider two belief vectors $\mathbf{s}$ and 
$\mathbf{s}_\Pi$, such that permutation $\Pi$ is an $i,j$-swap, and $s_j\geq 
s_i$ for some $j\leq i$.  If $R(\cdot)$ is a bounded regular function, 
$\tau(\cdot)$ a monotonically increasing contracting map, and $ \beta \leq 1$, 
then 
\eqnsize\begin{equation}  \label{eq:swap}
 W_n(\mathbf{s})-W_n(\mathbf{s}_\Pi)\geq 0 \text{\normalsize{ if } } \Delta_{l} 
\geq \Delta_{u} \beta  p \frac{1-( \beta (1-p))^{T+1}}{1-\beta(1-p)}.
\end{equation} \normalsize
\end{lem}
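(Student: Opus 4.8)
The plan is to expand both pseudo value functions by one step of the recursion~\eqref{eq:pseudo_val} and compare them term by term, so that the whole estimate reduces to the one-element perturbation bound of Lemma~\ref{lem:upb}. First I would dispose of the easy cases with the symmetry~\eqref{eq:Wsymmetric}: since $j\le i$, if $i,j\le K$ or $i,j>K$ then $W_n(\mathbf{s})=W_n(\mathbf{s}_\Pi)$ and there is nothing to prove, so I may assume $j\le K<i$. Write $b\triangleq s_j$ and $a\triangleq s_i$, so $a\le b$. The two one-step schedules then differ in a single slot: in $\mathbf{s}$ the node with belief $b$ occupies the scheduled position $j$ while the node with belief $a$ sits in the idle position $i$, whereas in $\mathbf{s}_\Pi$ these two are interchanged; the remaining $K-1$ scheduled nodes (call this common set $\mathcal{C}$) and all other idle nodes are identical. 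For the instantaneous term, decomposability, symmetry and boundedness of $R(\cdot)$ give $R(\mathbf{s}^{K})-R(\mathbf{s}_\Pi^{K})\ge(b-a)\Delta_l$.

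Next I would couple the two sums over active sets. Every term is indexed by an active subset $\mathcal{E}_0\subseteq\mathcal{C}$ together with the indicator of whether the node in position $j$ is operative, and since $q(|\mathcal{E}|,K)$ depends only on $|\mathcal{E}|$, this yields a weight-preserving bijection between the terms of $W_n(\mathbf{s})$ and of $W_n(\mathbf{s}_\Pi)$. I then split on that indicator. If position $j$ is idle, the nodes with beliefs $a$ and $b$ both survive and, in either vector, the successor belief multiset consists of $\{\tau(a),\tau(b)\}$ together with the identical transformed beliefs of every other node; because $\mathrm{T}(\cdot)$ sorts the surviving beliefs before $\mathbf{0}(|\mathcal{E}_0|)$ is appended, the two vectors handed to $W_{n+1}$ are literally equal, so this term contributes $0$ to $W_n(\mathbf{s})-W_n(\mathbf{s}_\Pi)$. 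If position $j$ is active, then in $\mathbf{s}$ the $b$-node resets to $0$ while the $a$-node becomes $\tau(a)$, and in $\mathbf{s}_\Pi$ the $a$-node resets to $0$ while the $b$-node becomes $\tau(b)$; hence the two successor vectors are the sorted rearrangements of multisets that differ in the single entry $\tau(a)$ versus $\tau(b)$. Applying Lemma~\ref{lem:upb} at TS $n+1$ together with the contraction $\tau(b)-\tau(a)\le b-a$ from Property~\ref{prop1} bounds the shortfall of $\mathbf{s}$ on such a term by $\Delta_u(b-a)\,u(n+1)$.

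Finally I would add these up. The combined weight of the ``position $j$ active'' terms is $\sum_{\mathcal{E}_0\subseteq\mathcal{C}}q(|\mathcal{E}_0|+1,K)=p\,(p+(1-p))^{K-1}=p$, so
\[
W_n(\mathbf{s})-W_n(\mathbf{s}_\Pi)\ \ge\ (b-a)\bigl(\Delta_l-\beta p\,\Delta_u\,u(n+1)\bigr),
\]
the base case $n=T$ being covered directly by the reward term. Since $b\ge a$ and $u(n+1)=\sum_{i=0}^{T-n-1}(\beta(1-p))^i\le\sum_{i=0}^{T}(\beta(1-p))^i=\frac{1-(\beta(1-p))^{T+1}}{1-\beta(1-p)}$, the hypothesis $\Delta_l\ge\Delta_u\beta p\,\frac{1-(\beta(1-p))^{T+1}}{1-\beta(1-p)}$ makes the right-hand side nonnegative, which is exactly the claim.

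The step I expect to need the most care is the coupling: one must verify that the bijection between active-set terms really is weight-preserving, that the ``position $j$ idle'' branch produces genuinely identical successor vectors (so those terms cancel exactly, not merely approximately), and that in the ``position $j$ active'' branch the two successors differ in a single coordinate so that Lemma~\ref{lem:upb} applies as stated for ordered vectors. Once that structure is in place, what remains is the binomial identity for $q(\cdot,\cdot)$ and an elementary comparison of geometric sums.
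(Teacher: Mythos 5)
Your proposal is correct and follows essentially the same route as the paper's Appendix~C: reduce to the case $j\le K<i$ via symmetry, expand one step, observe that the terms where the swapped scheduled position is inoperative cancel exactly, bound the operative terms via Lemma~\ref{lem:upb} together with the contraction property, and collect the weight $p$ before comparing geometric sums. The only cosmetic difference is that the paper bounds $u(n+1)$ by $u(0)$ using monotonicity of $u$ rather than by extending the sum to $T$ terms, which is the same estimate.
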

\begin{proof}
See Appendix \ref{app:diff}.
\end{proof}

\begin{thm} \label{thm:1}If  $R(\cdot)$ is a bounded regular function, 
$\tau(\cdot)$ a monotonically increasing contracting map, $ \beta \leq 1$,  and 
$\Delta_{l} \geq \Delta_{u} \beta  p \frac{1-( \beta 
(1-p))^{T+1}}{1-\beta(1-p)}$, then MP is the optimal policy.
\end{thm}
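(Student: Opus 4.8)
The plan is to prove the theorem by backward induction on the time slot index $n$, using Lemma~\ref{lem:opt1} to reduce optimality at each slot to a single swap inequality and Lemma~\ref{lem:swap} to certify that inequality under the stated hypothesis. No new machinery is needed beyond what the lemmas already provide; the theorem is essentially their assembly.

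For the base case, take TS $T$. By \eqref{eq:pseudo_val} we have $W_T(\mathbf{s}_\Pi)=R(\mathbf{s}_\Pi^{K})$, and since $R(\cdot)$ is the linear reward in \eqref{eq:reward} in which every scheduled node carries the same coefficient $pB$, the quantity $R(\mathbf{s}_\Pi^{K})$ is maximised exactly when $\mathbf{s}_\Pi^{K}$ collects the $K$ largest belief states, which is precisely the set MP schedules. Hence MP is optimal in TS $T$. For the inductive step, assume MP is optimal from TS $n+1$ through TS $T$. By Lemma~\ref{lem:opt1}, to conclude that MP is optimal in TS $n$ it suffices to show $W_n(\mathbf{s})\geq W_n(\mathbf{s}_\Pi)$ for every $i,j$-swap $\Pi$ with $s_j\geq s_i$ and $j\leq i$. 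But this is exactly the conclusion of Lemma~\ref{lem:swap}: under the standing assumptions that $R(\cdot)$ is a bounded regular function, $\tau(\cdot)$ is a monotonically increasing contracting map, and $\beta\leq 1$, the inequality $W_n(\mathbf{s})-W_n(\mathbf{s}_\Pi)\geq 0$ holds whenever $\Delta_{l}\geq \Delta_{u}\,\beta\,p\,\frac{1-(\beta(1-p))^{T+1}}{1-\beta(1-p)}$, which is one of the hypotheses of the theorem. So the sufficient condition of Lemma~\ref{lem:opt1} is met, MP is optimal in TS $n$, and the induction closes; evaluating at $n=1$ gives optimality of MP for the original problem \eqref{eq:opt_problemv1}.

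At the level of the theorem there is therefore no real obstacle beyond bookkeeping, and I would only emphasise that the hypothesis $\Delta_{l}\geq\Delta_{u}\beta p\frac{1-(\beta(1-p))^{T+1}}{1-\beta(1-p)}$, although horizon-dependent, is uniform in $n$, so the same inequality serves every step of the induction rather than requiring a separately tuned condition per slot. The genuine difficulty lives upstream, in Appendix~\ref{app:diff} and the bound of Lemma~\ref{lem:upb} on which it rests: there one must expand $W_n(\mathbf{s})$ and $W_n(\mathbf{s}_\Pi)$ through \eqref{eq:pseudo_val}, pair the summations over active subsets $\mathcal{E}$ for the two permutations, and control the residual terms in which the swapped node is active in one configuration but idle in the other, using the contraction (Lipschitz) property of $\tau(\cdot)$ together with the constants $\Delta_{l},\Delta_{u}$ to telescope the per-slot losses into the geometric factor that appears in the hypothesis. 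If I were to anticipate where care is most needed in writing the full argument, it is in that matching of the subset sums and in verifying that the pseudo value function's symmetry \eqref{eq:Wsymmetric} lets one reduce a general scheduling deviation to a cascade of adjacent $i,j$-swaps, as invoked in the proof of Lemma~\ref{lem:opt1}.
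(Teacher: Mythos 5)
Your proof is correct and follows essentially the same route as the paper: backward induction anchored at TS $T$, with Lemma~\ref{lem:opt1} reducing the inductive step to the swap inequality and Lemma~\ref{lem:swap} supplying that inequality under the stated condition on $\Delta_{l}$ and $\Delta_{u}$. The only cosmetic remark is that the base case should appeal to the symmetry and monotonicity of a general bounded regular $R(\cdot)$ rather than to the specific linear form in \eqref{eq:reward}, since the theorem is stated for that broader class.
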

\begin{proof}
The proof is done by backward induction. We have already shown that  MP is 
optimal at TS $T$. Then we assume that MP is optimal from TS $n+1$ until TS $T$, 
and we need to show that  MP is optimal at TS $n$. To show that  MP is optimal 
at TS $n$, using Lemma~\ref{lem:opt1}, we only need to show that (\ref{eq:lem1}) 
holds. This is proven in Lemma~\ref{lem:swap}, which completes the proof.
\end{proof}

The result of Theorem~\ref{thm:1} holds for any $R(\cdot)$ that is a bounded 
regular function. The reward function studied here, i.e., the sum expected throughput in (\ref{eq:reward}), is a bounded 
regular function, and we have $\Delta_{u}=\Delta_{l}=pB$. Finally, we state 
the optimality of MP for the EH problem studied in this section.

\begin{thm}\label{thm:2}
For the reward function $R(\cdot)$ defined in 
(\ref{eq:reward}), if the transition probabilities satisfy 
$p_{11} \geq p_{01}$ and $e_{0} \leq \frac{p_{10}}{p_{01}+p_{10}}$, then MP is 
the optimal policy.  
\end{thm}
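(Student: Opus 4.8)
The plan is to obtain Theorem~\ref{thm:2} as a specialization of Theorem~\ref{thm:1}: once the hypotheses of the latter are verified for the throughput reward of \eqref{eq:reward} under the stated conditions on the transition probabilities, nothing further is required. So the first step is to check the four premises of Theorem~\ref{thm:1}. (i) $R(\cdot)$ in \eqref{eq:reward} is a bounded regular function with $\Delta_{u}=\Delta_{l}=pB$: this was already recorded after Definition~5, since \eqref{eq:reward} is linear in the (bounded) belief vector with the common coefficient $pB$ on every scheduled node, so it is symmetric, monotone, decomposable, and toggling one scheduled component between $0$ and $1$ changes $R(\cdot)$ by exactly $pB$. (ii) $\tau(\cdot)$ is a monotonically increasing contracting map: this is Property~\ref{prop1}, which holds whenever $p_{11}\geq p_{01}$; together with the energy model of Section~\ref{sec:case1} (under which an active node resets, so its belief state becomes $0$ in the next TS) and the assumption $e_{0}\leq \tfrac{p_{10}}{p_{01}+p_{10}}$, which underlies the one-to-one, monotone correspondence between idle time and expected normalized battery level, the monotone--contraction description of $\tau(\cdot)$ is valid along every belief trajectory generated by a scheduling policy. (iii) $\beta\leq 1$ holds by assumption. (iv) the threshold condition $\Delta_{l} \geq \Delta_{u}\,\beta p\,\frac{1-(\beta(1-p))^{T+1}}{1-\beta(1-p)}$ is the only inequality that needs any work.

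For (iv) I would substitute $\Delta_{u}=\Delta_{l}=pB$ and cancel the positive factor $pB$, reducing the condition to
\[
1 \;\geq\; \beta p\,\frac{1-(\beta(1-p))^{T+1}}{1-\beta(1-p)}.
\]
Since $\beta(1-p)\geq 0$, the term $(\beta(1-p))^{T+1}$ is nonnegative, so the right-hand side is at most $\frac{\beta p}{1-\beta(1-p)}$; here $1-\beta(1-p)=1-\beta+\beta p>0$ when $p>0$, and the bound is trivially $0\leq 1$ when $p=0$. Finally $\frac{\beta p}{1-\beta(1-p)}\leq 1$ is equivalent to $\beta p\leq 1-\beta+\beta p$, i.e.\ to $\beta\leq 1$, which holds. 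Hence the threshold condition of Theorem~\ref{thm:1} is automatically satisfied for the throughput reward, uniformly in $T$, $p$, and $\beta\in(0,1]$.

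With all four premises in place, Theorem~\ref{thm:1} applies verbatim and yields the optimality of MP, which is exactly the assertion of Theorem~\ref{thm:2}. I do not anticipate a genuine obstacle here: the computation in step~(iv) is elementary, and the only point deserving care is step~(ii)---confirming that the active-node reset update in \eqref{eq:believe_state} is consistent with the monotone--contracting picture of $\tau(\cdot)$ guaranteed by Property~\ref{prop1}, which is precisely the role played by the hypothesis $e_{0}\leq \tfrac{p_{10}}{p_{01}+p_{10}}$ (and, for $\tau(\cdot)$ itself, by $p_{11}\geq p_{01}$).
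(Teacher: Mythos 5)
Your proposal is correct and follows exactly the route the paper intends: Theorem~\ref{thm:2} is obtained by specializing Theorem~\ref{thm:1} to the reward in \eqref{eq:reward}, using Property~\ref{prop1} (which requires $p_{11}\geq p_{01}$ and $e_0\leq p_{10}/(p_{01}+p_{10})$) for the contraction hypothesis and the observation $\Delta_u=\Delta_l=pB$ made in the text. Your explicit check that the threshold condition reduces to $\beta p\sum_{i=0}^{T}(\beta(1-p))^i\leq \beta p/(1-\beta(1-p))\leq 1$, equivalent to $\beta\leq 1$, is the only computation the paper leaves implicit, and it is carried out correctly.
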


\section{Simultaneous Energy Harvesting and Data Transmission with Batteryless 
Nodes} \label{sec:case2}
Now we consider another special case of the system model introduced in 
Section~\ref{sec:system_model}. We assume that the nodes cannot store energy, 
and the harvested energy is lost if not used immediately. This might apply to low-cost batteryless nodes. Energy 
available for transmission in TS~$n$ is equal to the energy harvested in 
TS~$n-1$, that is,  $B_i(n)=E_i^h(n-1)$. We denote by $s_i(n)$ the belief state 
of node~$i$ at TS~$n$, which is the expected energy available for transmission, 
that is, the probability that the node is in the harvesting state. The belief 
state transition probabilities are 
\eqnsize\begin{equation}    \label{eq:believe_state_case2}
 s_i(n+1)=\left \{ \begin{array}{lll}
\tau(s_i(n)) & \text{ if~}  i \notin \mathcal{K}^a(n), \\
p_{11} 	     & \text{ if~}  i \in \mathcal{K}^a(n) \text{ w.p.~} s_i(n),  \\
p_{01} 	     & \text{ if~}  i \in \mathcal{K}^a(n) \text{ w.p.~} 1-s_i(n), 
\end{array} \right .
\end{equation}\normalsize
where $\tau(s) = (p_{11} - p_{01} )s + p_{01}$, and since $p_{11}\leq p_{01}$, 
it is a monotonically increasing affine function. This implies that if $s_i \geq 
s_j$ then $\tau(s_i)\geq \tau(s_j)$, that is, the order of the idle nodes is 
preserved. We note that $i\in \mathcal{K}^a(n)$ with probability $p$, if $i\in 
\mathcal{K}(n)$.  The problem is to find a scheduling policy, $\mathcal{K}(n)$,  
such that the expected discounted sum throughput is 
maximised over a time horizon $T$. 

We define the pseudo value function as follows
\eqnsize\begin{equation} \label{eq:pseudo_val_case2}
\begin{array}{rcl}
W_n(\mathbf{s}_\Pi)&\triangleq&  R(\mathbf{s}_\Pi^{K}) 
+\beta\displaystyle\sum_{\mathclap{\!\!\!\mathcal{E} \subseteq 
\mathcal{S}_\Pi^{K}}} ~~\displaystyle \sum_{\mathclap{~~~l_\mathcal{E} \in  
\{0,1\}^{|\mathcal{E}|}}}   h(l_\mathcal{E},K) \\
&& \times W_{n+1} \left (\mathbf{P}_{11}\left (\Sigma l_\mathcal{E} \right ),   
\pmb{\tau}(\mathbf{s}_{\overline{\mathcal{E}}}),   
\mathbf{P}_{01}\left(\overline{\Sigma} l_\mathcal{E}\right ) \right ) ,\\
W_T(\mathbf{s}_\Pi)&\triangleq&  R(\mathbf{s}_\Pi^{K}) ,
\end{array}
\end{equation}  \normalsize
where we denote the set of active nodes  by $\mathcal{E}$ and the $i$th active 
node by  $\mathcal{E}(i)$. We define 
$l_{\mathcal{E}}=(l_{\mathcal{E}(1)},\ldots,l_{\mathcal{E}(|\mathcal{E}|)})$, 
such that $l_{\mathcal{E}(i)}=1$ if the EH process of the corresponding node is 
in the harvesting state, and $l_{\mathcal{E}(i)}=0$ otherwise. We define the 
function  $h(l_\mathcal{E},K)\triangleq q(|\mathcal{E}|,K) 
\displaystyle\prod_{j\in \mathcal{E}} s_j^{l_j} (1-s_j)^{(1-l_j)}$, where 
$q(|\mathcal{E}|,K)$ is defined in~(\ref{eq:kchannels}).  We denote by   
$\mathbf{P}_{01}(a)$ and $\mathbf{P}_{11}(a)$ the vectors $(p_{01}, \ldots, 
p_{01})$ and $(p_{11}, \ldots, p_{11})$, respectively, of length $a$, and  we 
define  $\Sigma l_\mathcal{E}\triangleq\displaystyle \sum_{i \in \mathcal{E}} 
l_i$, and $\overline{\Sigma} l_\mathcal{E}\triangleq|\mathcal{E}|-\displaystyle 
\sum_{i \in \mathcal{E}} l_i$. The operator  $\pmb{\tau}(\cdot)$ applies the 
mapping $\tau(\cdot)$ to all its components. The pseudo value function schedules 
the nodes according to permutation $\Pi$, and if $\mathbf{s}_\Pi$ is ordered, then 
(\ref{eq:pseudo_val_case2}) is the \mbox{value function of MP.} 

Swapping the order of two scheduled nodes does not change 
the value of the pseudo value function, that is, the pseudo value function is 
symmetric. This property is similar to that in (\ref{eq:Wsymmetric}), but only 
for $i,j \leq K$. Similarly to \cite{RMAB:Ahmad2009} and \cite{RMAB:Ahmad2009a}, 
the mapping $\tau(\cdot)$ is linear, and hence, the pseudo value function is 
affine in each of its elements. This implies that, if $\Pi$ is an $i,j$-swap of 
$\barpi$, then
\begin{IEEEeqnarray}[\normalsize]{rlL} \label{eq:pseudo_linprop}
W&_n&(\mathbf{s}_\Pi)-W_n(\mathbf{s}_\barpi) \nonumber \\
&=&(s_{\Pi(j)} -s_{\Pi(i)}) \Big( W_n(\ldots, s_{\Pi(j)} =1, \ldots,  s_{\Pi(i)} 
=0, \ldots)\nonumber \\ &&-W_n(\ldots, s_{\Pi(j)} =0, \ldots,  s_{\Pi(i)} =1, 
\ldots) \Big).
\end{IEEEeqnarray}

MP schedules the nodes whose EH processes are 
more likely to be in the harvesting state.  Initially, nodes are 
ordered according to an initial belief. If a node is active,  it is sent to the 
first position of the queue if it is in the harvesting state, and to the last 
position if it is in the non-harvesting state. The idle nodes are moved forward 
in the queue. Due to the monotonicity of $\tau(\cdot)$, MP continues scheduling 
a node until it is active and its EH process is in the non-harvesting state.

\subsection{Proof of the optimality of MP}
We note that the result of Lemma~\ref{lem:opt1} is applicable in this case. If 
Lemma \ref{lem:case2} holds, the same arguments as in Theorem \ref{thm:1} can be 
used to prove the optimality of MP.

\begin{lem} \label{lem:case2} Let $\Pi$ be an $i,j$-swap, and consider a 
permutation $\barpi$, such that  $\barpi(k)=k-1$, for $\forall k \neq 1$ and 
$\barpi(1)=N$. If $s_j \geq s_i$ for some $j\leq i$, then we have the 
inequalities 
 \begin{subequations}\label{eq:lem:case2}
  \begin{IEEEeqnarray}[\eqnsize]{rcl}
1+W_n(\mathbf{s}_{\barpi}) &\geq & W_n(\mathbf{s}),\label{eq:lem:case2:1}\\ 
W_n(\mathbf{s}) &\geq& W_n(\mathbf{s}_{\Pi}). \label{eq:lem:case2:2}
  \end{IEEEeqnarray}
\end{subequations}
\end{lem}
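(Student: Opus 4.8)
The plan is to prove the two inequalities \eqref{eq:lem:case2:1} and \eqref{eq:lem:case2:2} jointly by backward induction on $n$, since they reinforce each other: \eqref{eq:lem:case2:2} is the swap inequality that (via Lemma \ref{lem:opt1}) gives optimality of MP at TS $n$, while \eqref{eq:lem:case2:1} is an auxiliary ``cyclic-shift'' bound that controls how much the pseudo value function can grow when one moves a node forward by one position, and it is needed to close the induction for \eqref{eq:lem:case2:2}. At the terminal TS $T$ both inequalities hold trivially: $W_T(\mathbf{s}_\Pi)=R(\mathbf{s}_\Pi^K)$ is symmetric in the first $K$ entries and monotone, so \eqref{eq:lem:case2:2} follows from $s_j\ge s_i$ with $j\le i$ by the decomposability/monotonicity of $R(\cdot)$, and \eqref{eq:lem:case2:1} follows because $R$ is normalized so that each scheduled node contributes at most $1$ to the reward (recall $s_i(n)\in[0,1]$ here and the coefficient is absorbed), so shifting a single node into the top-$K$ block costs at most $1$.

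For the inductive step I would assume both inequalities hold at TS $n+1$ and expand $W_n$ via \eqref{eq:pseudo_val_case2}. The crucial structural fact is \eqref{eq:pseudo_linprop}: because $\tau(\cdot)$ is affine, $W_n$ is affine in each belief coordinate, so the difference $W_n(\mathbf{s})-W_n(\mathbf{s}_\Pi)$ in \eqref{eq:lem:case2:2} factors as $(s_j-s_i)$ times a difference of pseudo value functions evaluated at the two extreme configurations $s_j=1,s_i=0$ and $s_j=0,s_i=1$. Since $s_j-s_i\ge 0$, it suffices to show that difference is nonnegative. I would then condition on which of the two swapped nodes (if either) is active and on the realized EH outcomes $l_\mathcal{E}$: when neither node is in $\mathcal{S}_\Pi^K$, or when both are, the two configurations induce identical distributions over next-slot belief vectors (up to a relabeling that $W_{n+1}$ is symmetric under), so the terms cancel; the only surviving contribution is when exactly one of positions $i,j$ lies in the top-$K$ block — here is where $j\le i$ matters, because it forces $j$ (the larger belief) into the scheduled set. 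In that case, after peeling off the immediate reward (which differs by a nonnegative amount) one is left comparing two continuation terms in which a ``1'' versus a ``0'' has been placed into the scheduled slot; the gap between the resulting successor vectors is exactly a forward shift of one node by one position, which is controlled by the induction hypothesis \eqref{eq:lem:case2:1} at TS $n+1$.

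To re-establish \eqref{eq:lem:case2:1} at TS $n$ I would again expand both sides with \eqref{eq:pseudo_val_case2}, couple the two expansions by the same random active set and EH realizations, and observe that $\mathbf{s}_{\barpi}$ is $\mathbf{s}$ cyclically rotated so that the dropped-from-top node is replaced by the bottom node; the per-realization difference of successor vectors is then either zero or itself a single-position forward shift, to which the TS-$(n+1)$ instance of \eqref{eq:lem:case2:1} applies, and the geometric discounting $\beta\le 1$ keeps the accumulated slack bounded by the constant $1$ on the left-hand side. The main obstacle I anticipate is the bookkeeping in the ``exactly one of $i,j$ scheduled'' case: one must check that the successor belief vectors in the two extreme configurations really do differ by a clean one-step forward shift after the ordering operation $\pmb\tau(\cdot)$ and the relabeling induced by active nodes jumping to the front/back, so that the symmetry \eqref{eq:pseudo_linprop}-type reductions and the induction hypotheses apply verbatim. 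Handling that alignment carefully — in particular ruling out the case $j\le i$ could place $i$ but not $j$ in the top block — is the technical heart of the argument; everything else is cancellation and geometric-series estimation.
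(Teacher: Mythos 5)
Your proposal follows essentially the same route as the paper's proof: a joint backward induction on \eqref{eq:lem:case2:1} and \eqref{eq:lem:case2:2}, reduction of the swap inequality to the extreme configurations $s_j=1,s_i=0$ versus $s_j=0,s_i=1$ via the affine property \eqref{eq:pseudo_linprop}, a case split on which of the two swapped nodes lie in the scheduled block, and a sample-path coupling that invokes the TS-$(n+1)$ instance of \eqref{eq:lem:case2:1} to control the surviving continuation terms. One small correction: in the case where neither swapped node is scheduled the terms do not simply cancel by symmetry (the pseudo value function of \eqref{eq:pseudo_val_case2} is symmetric only within the first $K$ positions), so there you must invoke the induction hypothesis of \eqref{eq:lem:case2:2} at TS $n+1$ rather than a relabeling argument --- exactly as the paper does in its Case $3$; with that adjustment your argument matches the paper's.
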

\begin{proof}
 The proof follows from the similar arguments as in \cite{RMAB:Ahmad2009a}. In 
particular, we use backward induction in (\ref{eq:lem:case2:1}) and 
(\ref{eq:lem:case2:2}), and a sample-path argument. A sketch of the proof is 
provided in Appendix~\ref{app:case2}.
\end{proof}
Note that (\ref{eq:lem:case2:1}) and (\ref{eq:lem:case2:2}) are similar to 
(\ref{eq:upb}) and (\ref{eq:swap}), respectively.

\begin{thm}\label{thm:3} If the reward function is $R(\mathcal{K}(n))=p\!\!\! 
\displaystyle\sum_{i\in\mathcal{K}(n)}\!\!\!s_i(n)$, and $p_{11} \geq p_{01}$,  MP is the optimal policy.    
\end{thm}

\begin{proof}
Theorem \ref{thm:3} can be proven by using the same arguments as in Theorem 
\ref{thm:1} \mbox{and Lemmas~\ref{lem:opt1}  and \ref{lem:case2}.} 
\end{proof}

\begin{rem}
 This problem is similar to the opportunistic multi-channel access problem 
studied in \cite{RMAB:Ahmad2009,RMAB:Ahmad2009a,RMAB:Mansourifard2012,RMAB:Wang2012}, with 
imperfect channel sensing, such that, at each attempt, a channel can not be 
sensed with probability $1-p$, independent of its channel state. While the MP 
has been proven to be optimal in the case of perfect channel sensing, i.e., 
$p=1$, \cite{RMAB:Ahmad2009a}, the case with sensing errors, i.e., $p\neq1$, has 
not been considered in the literature. We also note that this model of imperfect 
channel detection is different from that in \cite{RMAB:Liu2010}. 
\end{rem}

\begin{rem}
 Using similar techniques as in \cite{RMAB:Ahmad2009} the MP optimality results 
of Sections~\ref{sec:case1} and \ref{sec:case2} can be extended from the finite 
horizon discounted reward criteria to the infinite horizon with
discounted reward, and to the infinite horizon with average reward criteria.
\end{rem}

\section{Upper Bound on the Performance of the Optimal Scheduling Policy} 
\label{sec:upp}

Next we derive an upper bound on the performance of the optimal 
policy for the general model in Section~III under the average reward 
criteria and infinite time horizon. The RMAB problem with an infinite horizon 
discounted reward criteria is studied in \cite{RMAB:D.Betsimas2000}, and it is 
shown that an upper bound can be computed in polynomial time using LP.

The decision of scheduling a node in a TS affects the scheduling of the other 
nodes in the same TS, since exactly $K$ nodes have to be scheduled at each TS.  
Whittle \cite{RMAB:Whittle1988} proposed to relax the original problem 
constraint, and impose instead that the number of nodes that are scheduled at 
each TS is $K$ \emph{on average}. In the relaxed problem, since the nodes are 
symmetric, one can decouple the original RMAB problem into $N$ RMAB problems, 
one for each node.  As before, we denote  by $s=(l,h)\in \mathcal{W}$ the belief 
state of a node, where $l$ is the number of TSs the node has been idle for, and 
$h$ the EH state last time the node was scheduled, and $\mathcal{W}$ the belief 
state space.  We denote by  $\pi(s)$ the probability that a node is scheduled if 
it is in state $s$, by $p(s)$ the steady state probability of state $s$, and by 
$p_{\tilde{s},s}(a)$ the state transition probability function from state 
$\tilde{s}$ to $s$ if action $a\in\{0,1\}$ is taken, where $a=1$ if the node is 
scheduled in this TS, and $a=0$, otherwise.  The optimization problem is 
\eqnsize
\begin{equation}   \label{eq:opt_problemv_rel1}
\begin{aligned} 
\max_{\pi(s),p(s)} &\sum_{s\in\mathcal{W}} R(s) \pi(s) p(s) \\
\text{\normalsize{s.t.} }  &
	      p(s)= \sum_{\tilde{s}\in \mathcal{W}} p(\tilde{s}) 
[(1-\pi(s))p_{\tilde{s},s}(0)+\pi(s) p_{\tilde{s},s}(1)], \\
	    & \sum_{s\in \mathcal{W}}\pi(s) p(s) =\frac{K}{N}, 
\text{\normalsize{ and }} \sum_{s\in \mathcal{W}} p(s) =1,\\
	    \end{aligned}
\end{equation} \normalsize
where $0 \leq \pi(s),~p(s) \leq 1$, and $R(s)$ is the expected throughput of a 
node if it is in state $s$. Note that the node is scheduled every $\frac{N}{K}$ 
TSs on average. This implies that, for $p=1$, the maximum time a node can be idle is 
finite, and hence, the state space $\mathcal{W}$ is finite. If $p\neq1$, one can truncate the state space by bounding the maximum time a 
node can be idle, i.e., imposing that $l$ is bounded.  The 
problem~(\ref{eq:opt_problemv_rel1}) has a linear objective function and 
linear constrains, and the state space is finite, therefore it can be solved in polynomial 
time with LP.

\section{Numerical Results} \label{sec:numerical}

In this section we numerically study the performances of different scheduling policies for 
the general case described in Section \ref{sec:system_model}. In particular, we consider MP 
which is optimal for the cases studied in Sections \ref{sec:case1} and 
\ref{sec:case2}, the RR policy, which schedules the nodes in a cyclic fashion 
according to an initial random order, and a random policy, which at each TS 
schedules $K$ random nodes regardless of the history. We measure the performance 
of the scheduling policies as the average throughput per TS over a time horizon 
of $T=1000$, that is, we consider $\beta=1$ and normalise 
\eqref{eq:opt_problemv1} by $T$. We perform $100$ repetitions for each 
experiment and average the results. We assume, unless otherwise stated, a total 
of $N=30$ EH nodes, $K=5$ available channels, and a probability $p=0.5$ for a 
node to be operative in each TS. We assume that all the nodes and EH processes 
are symmetric, the batteries have a capacity of $B=5$ energy units, and the 
transition probabilities of the EH processes are $p_{11}=p_{00}=0.9$. Notice 
that, on average, each node is scheduled every $\frac{N}{K}$ TSs. Hence, if 
$\frac{N}{K}$ is large the nodes  remain idle for larger periods. This implies 
that when $\frac{N}{K}$ is large, since the nodes harvest over many TSs without 
being scheduled, there are more energy overflows in the system. In the numerical 
results we have included the infinite horizon upper bound of 
Section~\ref{sec:upp}, which for large $T$ is a tight upper bound on the 
finite horizon case.

In Figure~\ref{fig:NvsC} we investigate the impact of the number of nodes on the 
 throughput, when the number of available channels, $K$,  is fixed. The 
throughput increases with the number of nodes, and due to the battery 
overflows, saturates when the number of nodes is large. By increasing the 
battery capacity, hence reducing the battery overflows, the throughput saturates 
with a higher number of nodes and at a higher value. We observe that MP has a 
performance close to that of the upper bound, the random policy has a lower 
performance than the others; and the gap between different curves increases with the 
battery capacity.

In Figure~\ref{fig:bat} we investigate the effect of the battery capacity, $B$, 
on the system throughput when the number of nodes is fixed. Clearly, the larger 
the battery capacity the fewer battery overflows will occur. The throughput 
increases with the battery capacity, and due to the limited amount of energy 
that the nodes can harvest, it saturates at a certain value. By increasing the 
number of  available channels, $K$,  which also reduces the battery overflow, 
the throughput saturates more quickly as a function of the battery capacity, but 
at higher values. The performances of the scheduling policies are similar to 
those observed in Figure~\ref{fig:NvsC}. 

\begin{figure}[h!]
     \begin{center}
       \subfigure[$N/K$ ]{%
         \includegraphics[width=0.45\textwidth]{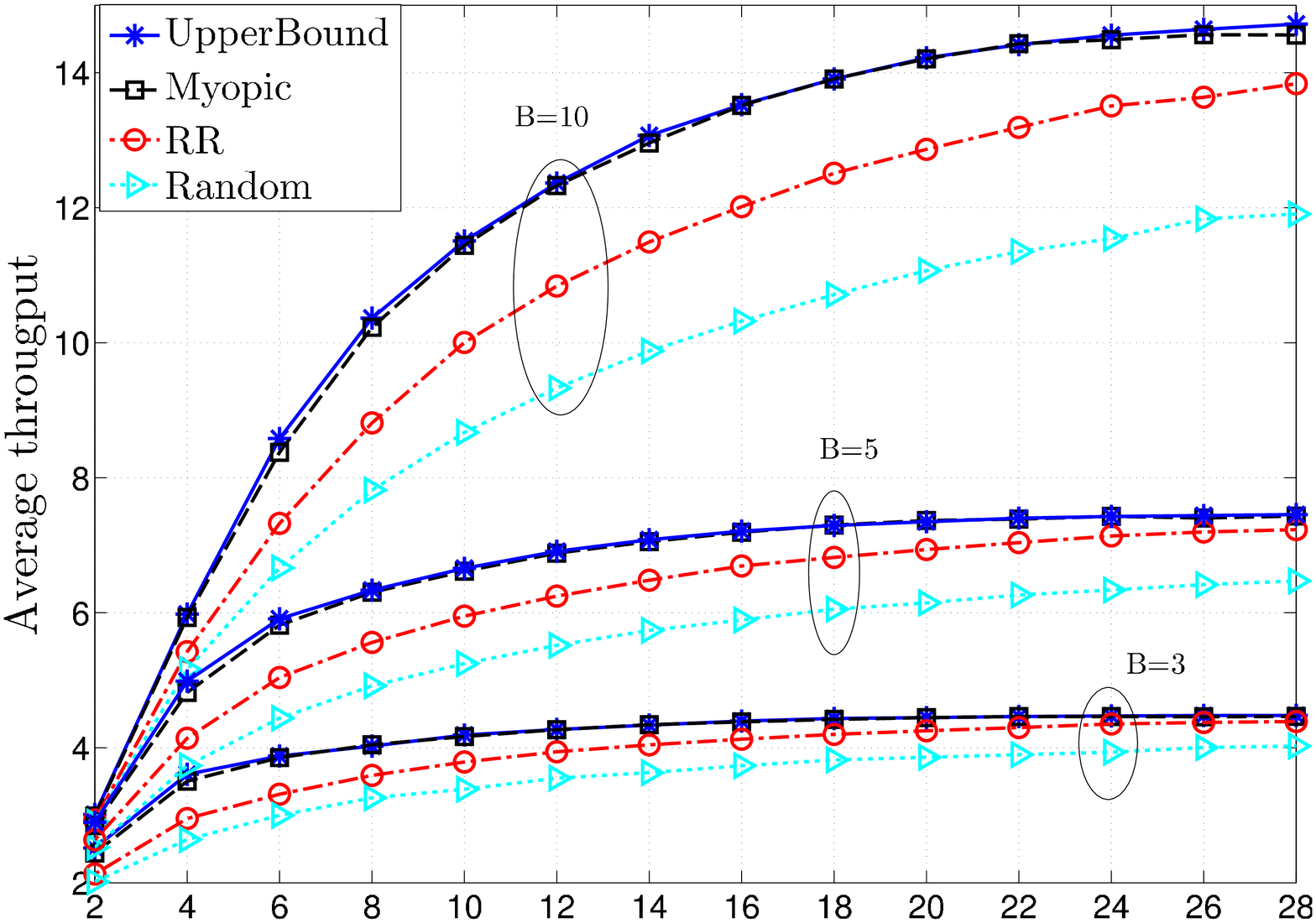} \label{fig:NvsC} 
        }%
        \\
        \subfigure[Battery capacity ($B$)  ]{%
\includegraphics[width=0.45\textwidth]{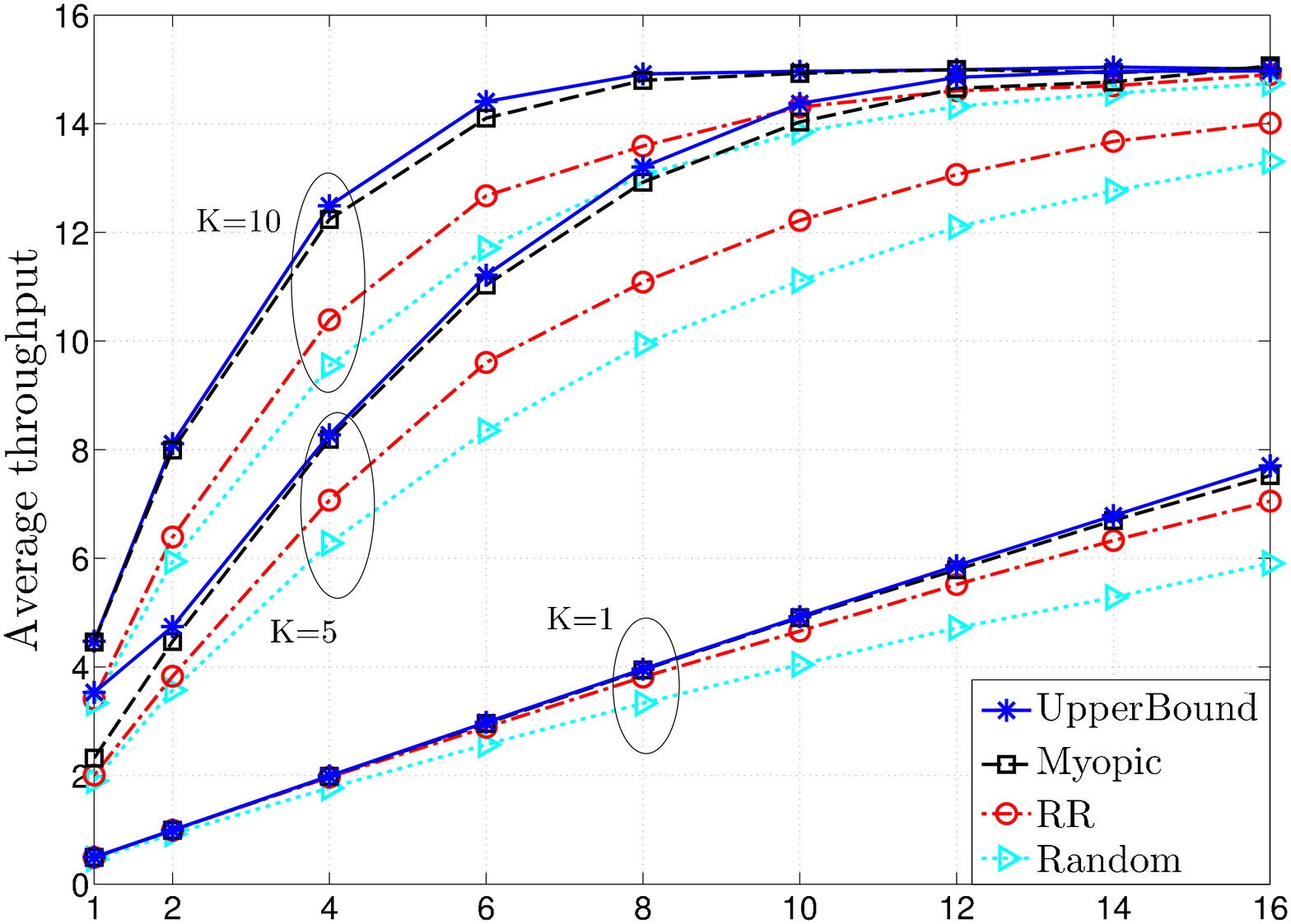}\label{fig:bat}
       }
    \end{center} 
    \caption{ (a) Average throughput vs. number of nodes, $N$, with $K=5$ 
channels, and battery capacity $B=3,5,10$, and (b) average throughput vs. 
battery capacity, $N=30$, and $K=1,5,10$.} 
\label{fig:param}
\end{figure}

 \begin{figure}[h!]
     \begin{center}
       \subfigure[$p_{11}$  ($p_{00}=0.5$)]{%
           \label{fig:p11p00=05}
\includegraphics[width=0.45\textwidth]{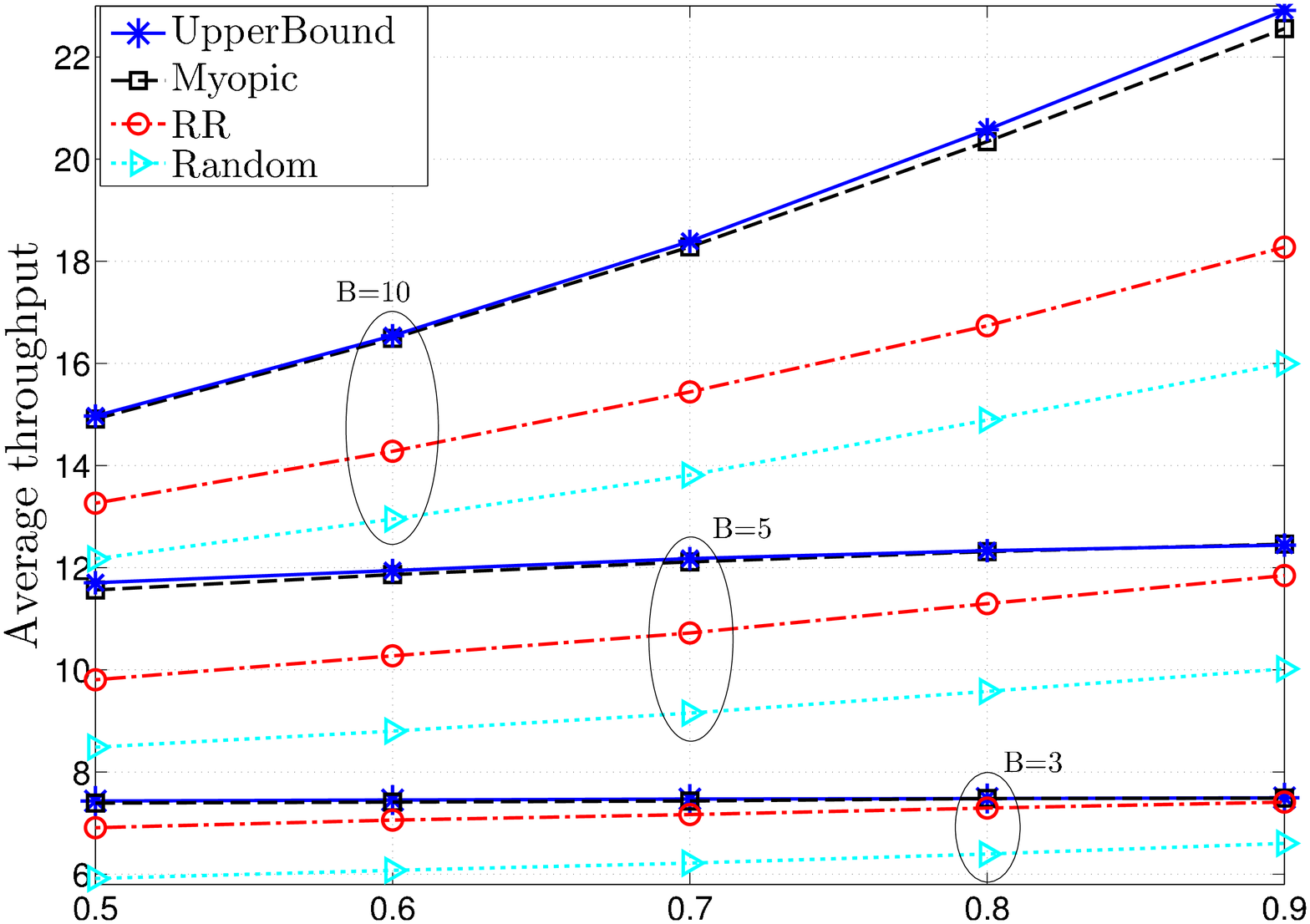}
        }%
        \\
        \subfigure[$p_{11}$  ($p_{00}=0.9$)]{%
           \label{fig:p11p00=09}
\includegraphics[width=0.45\textwidth]{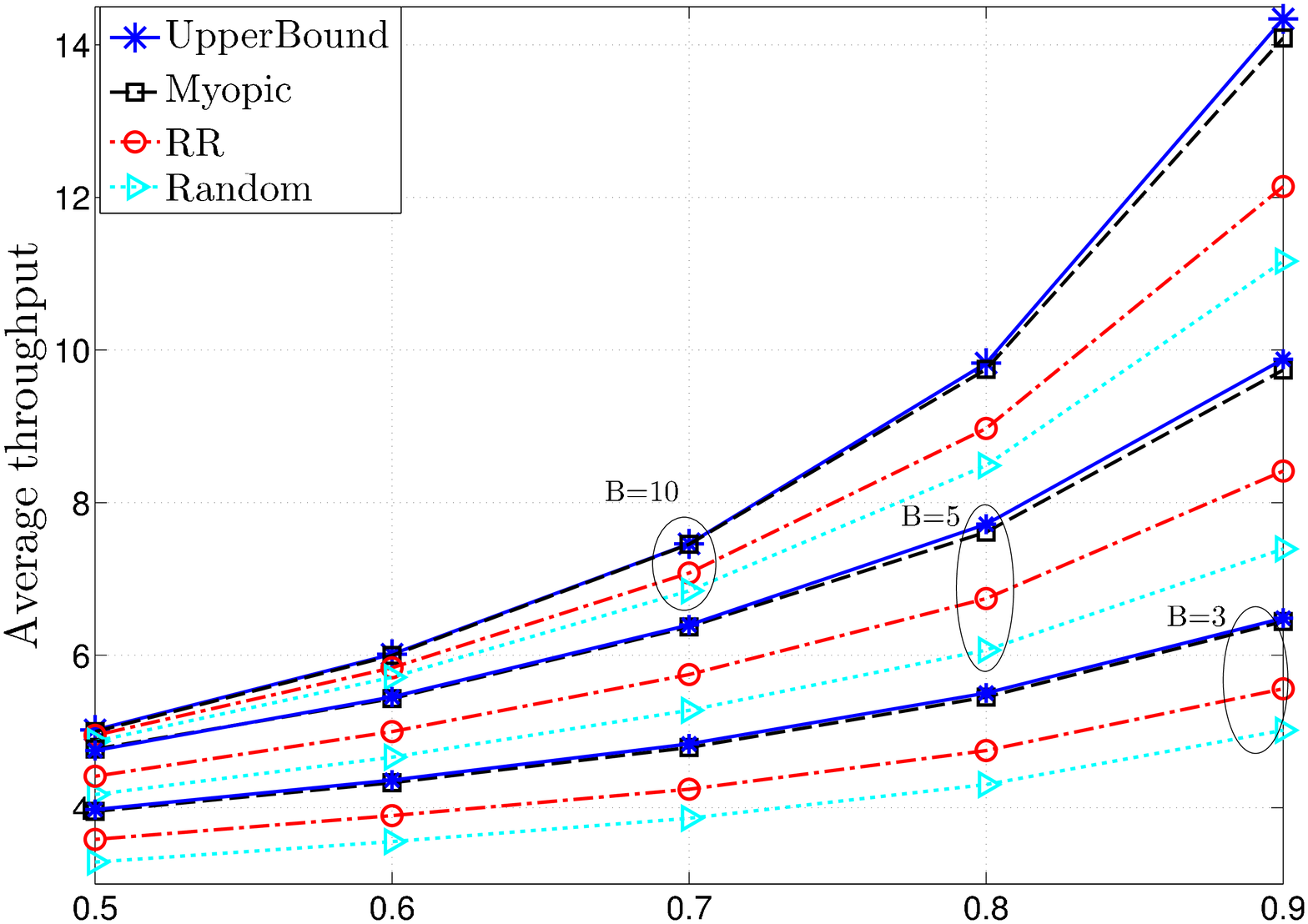}
       }
    \end{center}
    \caption{Average throughput for different EH process transition 
probabilities, $N=30$, $K=5$, and $B=3,5,10$.} \label{fig:prob} 
\end{figure}

Figure~\ref{fig:prob} shows the average throughput for different EH process 
transition probabilities. We note that the amount of energy  arriving to the 
system increases with $p_{11}$ and decreases with $p_{00}$. As expected, we 
observe in Figure~\ref{fig:prob} that the throughput increases with $p_{11}$, and the values in  Figure~\ref{fig:p11p00=05} are notably higher than those 
in Figure~\ref{fig:p11p00=09}. MP is a policy which maximises the immediate 
throughput at each TS, and does not take into account the future TSs. We observe 
in Figure~\ref{fig:p11p00=09} for $B=\{5,10\}$ and in  
Figure~\ref{fig:p11p00=05} for $B=10$ that,  if the EH state has low correlation 
across TSs, that is, $p_{11}=\{0.5, 0.6\}$, the throughput obtained by MP is 
similar to that of the upper bound.  On the contrary, if it has high correlation 
across TSs, that is  $p_{11}=\{0.8, 0.9\}$, the throughput falls below the upper 
bound.  This is due to the fact that when the state transitions have low 
correlation it is difficult to reliably predict the impact of the actions on the 
future rewards, and no transmission strategy can improve upon MP. Our numerical 
results indicate, that even in scenarios in which the MP cannot be shown to be 
theoretically optimal, it performs very close to the upper bound, obtained for 
an infinite horizon problem. 

\section{Conclusions} \label{sec:con}
We have studied a scheduling problem in a multi-access communication system with 
EH nodes, in which the harvested energy at each node is modeled as a Markov 
process. We have modeled the system as an RMAB problem, and shown the 
optimality of MP in two settings: i) when the nodes cannot harvest energy and 
transmit simultaneously and the EH process state is independent of the past 
states after a node is active; ii) when the nodes have no battery. The results of this paper suggest that although the optimal scheduling in large EH networks requires high computational complexity, in some cases there exist simple and practical scheduling policies that have almost optimal performance. This can have an impact on the design of scheduling 
policies for large low-power wireless sensor networks equipped with energy 
harvesting devices and limited storage.  


\begin{appendices}
\section{}\label{app:contracting_map}
We denote the probability that the battery of a node is not full if the node has 
been idle for the last $n$ TSs  by $p_{nf}(n)$. It is easy to note that 
$p_{nf}(n)$ is a decreasing function of $n$. If the node has been idle for $n$ TSs, 
we denote the probability of the EH process being in state $0$ and $1$, by 
$p_{0}(n)\triangleq p_{10}+p_{0}(n-1)(p_{11}-p_{01})$ and  
$p_1(n)\triangleq1-p_0(n)$, respectively. We set $p_0(0)=e_0$. Since $p_{11}\geq 
p_{01}$ and $e_0 \leq  \frac{p_{10}}{p_{01}+p_{10}}$, $p_{0}(n)$ monotonically 
increases to the steady state distribution~(\cite[Appendix~B]{RMAB:Zhao2008}). 

We denote the belief state of a node that has been idle for $n$ TSs by $z_n$. If 
the node has been idle for $n+1$ TSs, the expected battery level is 
$z_{n+1}=\tau(z_n)=z_n+\frac{p_{nf}(n)}{B}  (p_{01}p_{0}(n)+p_{11}p_{1}(n))$, 
which is a monotonically increasing function. If $n\geq m$, then $z_n\geq z_m$ 
and $\tau(z_n) \geq \tau(z_m)$. By applying the definition of $p_{1}(n)$, we get 
 $z_{n+1}=z_n+\frac{p_{nf}(n)}{B}  (p_{11}-p_{0}(n)(p_{11}-p_{00}))$. If we 
assume that $n\geq m$, we have     
\begin{IEEEeqnarray}[\small]{rCL}
 \|\tau(z_n)-\tau(z_m) \| &=&z_n-z_m +\frac{p_{nf}(n)}{B} (p_{11} 
-p_0(n)(p_{11}-p_{01})) \nonumber\\ 
 &&-\frac{p_{nf}(m)}{B} (p_{11}\!-\!p_0(m)(p_{11}-p_{01})) \nonumber\\
  &\leq& z_n\!-\!z_m\!-\!\frac{p_{nf}(n)}{B} (p_{11}\!-\!p_{01}) 
(p_0(n)\!-\!p_0(m))\nonumber\\
  &\leq& z_n-z_m  \nonumber,
\end{IEEEeqnarray}
where the first inequality follows since $p_{nf}(n) \leq p_{nf}(m)$, and the 
second inequality follows since $p_{0}(n)$ is monotonically increasing and 
$p_{11} \geq p_{01}$.

\section{}\label{app:lub}
 The proof uses backward induction. We denote by $\mathcal{S}_\ordpiv^{K}$ and 
$\tilde{\mathcal{S}}_\ordpiv^{K}$ the nodes scheduled from $\mathbf{s}_\ordpiv$ 
and $\tilde{\mathbf{s}}_\ordpiv$, respectively. We first observe that 
(\ref{eq:upb}) holds for $n=T$.  This follows from the bounded regularity of 
$R(\cdot)$, noting that $u(T)=1$, and distinguishing four possible cases.

\begin{itemize}
  \item Case $1$: $j \in \mathcal{S}_\ordpiv^{K}$ and $j \in 
\tilde{\mathcal{S}}_\ordpiv^{K}$, i.e., node $j$ is scheduled in both cases.
  \begin{IEEEeqnarray}[\small]{rcl}
  W&_T&(\mathbf{s}_\ordpiv)- W_T(\tilde{\mathbf{s}}_\ordpiv) \nonumber \\
  &=& 
R(s_{\ordpiv(1)},\ldots,s_j,\ldots,s_{\ordpiv(K)})-R(\tilde{s}_{\ordpiv(1)},
\ldots,\tilde{s}_j,\ldots,\tilde{s}_{\ordpiv(K)})\nonumber \\
  &=&  s_j 
R(s_{\ordpiv(1)},\ldots,1,\ldots,s_{\ordpiv(K)})+(1-s_j)R(s_{\ordpiv(1)},\ldots,
0,\nonumber \\
  && \ldots,s_{\ordpiv(K)}) 
-\tilde{s}_jR(\tilde{s}_{\ordpiv(1)},\ldots,1,\ldots,\tilde{s}_{\ordpiv(K)}
)\nonumber \\
  &&-(1-\tilde{s}_j) 
R(\tilde{s}_{\ordpiv(1)},\ldots,0,\ldots,\tilde{s}_{\ordpiv(K)}) \nonumber\\
  &=&(s_j-\tilde{s}_j) 
\big(R(s_{\ordpiv(1)},\ldots,1,\ldots,s_{\ordpiv(K)})\nonumber \\
  &&-R(s_{\ordpiv(1)},\ldots,0,\ldots,s_{\ordpiv(K)}) \big) \nonumber \\
  &\leq& (s_j-{\tilde{s}_j})\Delta_{u} u(T),  \nonumber 
  \end{IEEEeqnarray}
   where the second equality follows from the decomposability of $R(\cdot)$. 
Since $R(\cdot)$ is symmetric and the belief vectors are equal but for node $j$, 
we have 
$R(s_{\ordpiv(1)},\ldots,\tilde{s}_j=k,\ldots,s_{\ordpiv(K)})=R(\tilde{s}_{
\ordpiv(1)},\ldots,\tilde{s}_j=k,\ldots,\tilde{s}_{\ordpiv(N)})$, which we use 
in the third equality. Finally, the inequality follows from the boundedness of 
$R(\cdot)$. 
  \item Case $2$: $j \notin \mathcal{S}_\ordpiv^{K}$ and $j \notin 
\tilde{\mathcal{S}}_\ordpiv^{K}$,  i.e., node $j$ is not scheduled in either 
case. The same nodes with the same beliefs are scheduled in both cases, hence, 
$\mathbf{s}_\ordpiv^{K}=\tilde{\mathbf{s}}_\ordpiv^{K}$, and $ 
W_T(\mathbf{s}_\ordpiv)- W_T(\tilde{\mathbf{s}}_\ordpiv)=0$.
  \item Case $3$:  $j \in \mathcal{S}_\ordpiv^{K}$ and $j \notin 
\tilde{\mathcal{S}}_\ordpiv^{K}$. In this case there exists a node $m \in 
\tilde{\mathcal{S}}_\ordpiv^{K}$  such that $s_j \geq s_m \geq \tilde{s}_j$, and 
$m \notin \mathcal{S}_\ordpiv^{K}$
      \begin{subequations}
  \begin{IEEEeqnarray}[\small]{lll}
 W&_T&(\mathbf{s}_\ordpiv)- W_T(\tilde{\mathbf{s}}_\ordpiv) \nonumber \\ 
 &=& (s_j-s_m)\big(R(s_{\ordpiv(1)},\ldots,1,\ldots,s_{\ordpiv(K)})\nonumber \\
 &&-R(s_{\ordpiv(1)},\ldots,0,\ldots,s_{\ordpiv(K)}) \big)  \nonumber \\
  &\leq& (s_j-{\tilde{s}_j}) 
\big(R(s_{\ordpiv(1)},\ldots,1,\ldots,s_{\ordpiv(K)})\nonumber \\
  &&-R(s_{\ordpiv(1)},\ldots,0,\ldots,s_{\ordpiv(K)}) \big) \nonumber\\
  &\leq& (s_j-{\tilde{s}_j}) \Delta_{u} u(T), \nonumber  
  \end{IEEEeqnarray}
  \end{subequations}
  where the first equality follows similar to Case $1$, the second equality from 
the fact that $s_m \geq \tilde{s}_j $, and the last inequality from the 
boundedness of $R(\cdot)$. Note that node $m$ is the node with the highest 
belief state that is not scheduled in $W_T(\mathbf{s}_\ordpiv)$, and the node 
with the lowest belief state scheduled~in~$ W_T(\tilde{\mathbf{s}}_\ordpiv)$.
\item Case $4$:  $j \notin \mathcal{S}_\ordpiv^{K}$ and $j \in 
\tilde{\mathcal{S}}_\ordpiv^{K}$. This case is not possible since the vectors 
$\mathbf{s}_\ordpiv$ and $\tilde{\mathbf{s}}_\ordpiv$ are ordered and $s_j \geq 
{\tilde{s}_j}$, hence, if ${\tilde{s}_j}$ is scheduled then $s_j$ must be 
scheduled too.
\end{itemize}
Now, we assume that (\ref{eq:upb}) holds from TS $n+1$ up to $T$, and show 
that it holds for TS $n$ as well. We distinguish three cases:
\begin{itemize}
  \item Case $1$: $j \in \mathcal{S}_\ordpiv^{K}$ and $j \in 
\tilde{\mathcal{S}}_\ordpiv^{K}$ in  \eqref{seq:upb3_all}, i.e., node $j$ is 
scheduled in both cases.  The first and second summations in the first line of (\ref{seq:upb3_1})   
correspond to the cases in which node $j\in  \mathcal{S}_\ordpiv^{K} $ is idle 
and active, respectively, in TS $n$. Similarly, first and second summations in 
the second line of (\ref{seq:upb3_1})  correspond to the cases in which node 
$j\in  \tilde{\mathcal{S}}_\ordpiv^{K}$ is idle and active, respectively, in TS 
$n$. Note that the belief state vector 
$\tilde{\mathbf{s}}_{\overline{\mathcal{E}\cup j}}$ includes the belief states 
of all the nodes in $\tilde{\mathbf{s}}_\ordpiv$, but those in $\mathcal{E}$ and 
$\tilde{s}_j$, hence, it is equivalent to the belief state vector 
$\mathbf{s}_{\overline{\mathcal{E}\cup {j}}}$. We use this fact to get 
(\ref{seq:upb3_3}). Note that the belief state vectors in  (\ref{seq:upb3_3}) 
differ only in the belief states of node $j$, namely, $\tau(s_j)$ and 
$\tau(\tilde{s}_j)$ are the beliefs of node $j$ in vectors 
$[\mathrm{T}(\mathbf{s}_{\overline{\mathcal{E}}}),\mathbf{0}(|\mathcal{E}|)]$ 
and  
$[\mathrm{T}(\tilde{\mathbf{s}}_{\overline{\mathcal{E}}}),\mathbf{0}(|\mathcal{E
}|)]$, respectively; and hence, we use the induction hypothesis in the summation 
of (\ref{seq:upb3_3}) to obtain (\ref{seq:upb3_4}). The summation in 
(\ref{seq:upb3_4}) is over all possible operative/inoperative combinations of 
the nodes in $ \mathcal{S}_\ordpiv^{K}\backslash \{j\}$, and it is equal to one. 
This fact together with the boundedness and the decomposability of $R(\cdot)$ 
are used in (\ref{seq:upb3_4}) to get (\ref{seq:upb3_4b}). The contracting 
property of $\tau(\cdot)$, and  the definition of $u(n)$ are used in 
(\ref{seq:upb3_4c}) and (\ref{seq:upb3_5}), respectively.
   \begin{floateq}
    \begin{subequations} \label{seq:upb3_all}
    \begin{IEEEeqnarray}[\normalsize]{rCL} 
   &W_n&(\mathbf{s}_\ordpiv) - W_n(\tilde{\mathbf{s}}_\ordpiv) \nonumber   \\ 
   &=& R(\mathbf{s}^K_\ordpiv) + (1\!-\!p) \beta \displaystyle 
\!\!\!\!\!\!\!\!\! \sum_{\mathcal{E}\subseteq 
\mathcal{S}_\ordpiv^{K}\backslash\{j\}}   \!\!\!\!\!\!\! 
\!q(|\mathcal{E}|,\!K\!\!-\!\!1) W_{n+1} ( 
[T(\mathbf{s}_{\overline{\mathcal{E}}}), \mathbf{0}(|\mathcal{E}|) ]) + p\beta 
\displaystyle \!\!\!\!\!\!\!\!\! \sum_{\mathcal{E}\subseteq 
\mathcal{S}_\ordpiv^{K}\backslash\{j\}}  \!\!\!\!\!\!\! 
\!\!q(|\mathcal{E}|,\!K\!\!-\!\!1) W_{n+1} ( 
[T(\mathbf{s}_{\overline{\mathcal{E}\cup j}}), 
\mathbf{0}(|\mathcal{E}|\!\!+\!\!1) ]) \nonumber \\
   &-& R(\tilde{\mathbf{s}}^K_\ordpiv)\! - \! (1\!-\!p) \beta \displaystyle 
\!\!\!\!\!\!\!\!\! \sum_{\mathcal{E}\subseteq 
\tilde{\mathcal{S}}_\ordpiv^{K}\backslash\{j\}}  \!\!\!\!\!\!\! 
\!\!q(|\mathcal{E}|,\!K\!\!-\!\!1) W_{n+1} ( 
[T(\tilde{\mathbf{s}}_{\overline{\mathcal{E}}}), \mathbf{0}(|\mathcal{E}|) ])\! 
-p\beta  \displaystyle \!\!\!\!\!\!\!\!\! \sum_{\mathcal{E}\subseteq 
\tilde{\mathcal{S}}_\ordpiv^{K}\backslash\{j\}}  \!\!\!\!\!\!\! 
\!\!q(|\mathcal{E}|,\!K\!\!-\!\!1) W_{n+1} ( 
[T(\tilde{\mathbf{s}}_{\overline{\mathcal{E}\cup {j}}}), 
\mathbf{0}(|\mathcal{E}|\!\!+\!\!1) ]) \label{seq:upb3_1}\\
  &=&  
R(\mathbf{s}^K_\ordpiv)-R(\tilde{\mathbf{s}}
^K_\ordpiv)+(1\!-\!p)\beta\displaystyle \!\!\!\!\!\!\!\!\! 
\sum_{\mathcal{E}\subseteq \mathcal{S}_\ordpiv^{K}\backslash\{j\}} 
\!\!\!\!\!\!\! \!\!q(|\mathcal{E}|,\!K\!\!-\!\!1)   \Big ( 
W_{n+1}([\mathrm{T}(\mathbf{s}_{\overline{\mathcal{E}}}),\mathbf{0}(|\mathcal{E}
|)]) 
-W_{n+1}([\mathrm{T}(\tilde{\mathbf{s}}_{\overline{\mathcal{E}}}),\mathbf{0}
(|\mathcal{E}|)]) \Big ) \label{seq:upb3_3}\\
  &\leq& R(\mathbf{s}^K_\ordpiv)-R(\tilde{\mathbf{s}}^K_\ordpiv)+ 
(1\!-\!p)\beta\displaystyle\!\!\!\!\!\! \sum_{\mathcal{E}\subseteq 
\mathcal{S}_\ordpiv^{K}\backslash\{j\}} \!\!\!\! 
\!\!q(|\mathcal{E}|,\!K\!\!-\!\!1) \Big ( 
\Delta_{u}(\tau(s_j)-\tau(\tilde{s}_j))u(n+1)\Big) \label{seq:upb3_4}\\
  &\leq& \Delta_{u} (s_j-\tilde{s}_j)+ (1\!-\!p)\beta 
\Delta_{u}(\tau(s_j)-\tau(\tilde{s}_j))u(n+1)\label{seq:upb3_4b}\\
&\leq& \Delta_{u} (s_j-\tilde{s}_j)+ (1\!-\!p)\beta  
\Delta_{u}(s_j-\tilde{s}_j)u(n+1)\label{seq:upb3_4c}\\
&\leq& \Delta_{u} (s_j-\tilde{s}_j)\Big( 1+ \beta(1\!-\!p) \displaystyle\!\!\!\! 
\sum_{i=0}^{T-n-1} (\beta(1\!-\!p))^i\Big) \label{seq:upb3_5}\\
&=& \Delta_{u} (s_j-\tilde{s}_j)u(n),
\end{IEEEeqnarray}
\end{subequations}
\end{floateq}

\item Case $2$:$j \notin \mathcal{S}_\ordpiv^{K}$ and $j \notin 
\tilde{\mathcal{S}}_\ordpiv^{K}$, i.e., the same nodes are scheduled from 
$\mathbf{s}_\ordpiv$ and $\tilde{\mathbf{s}}_\ordpiv$,  and node $j$ is not 
scheduled in either case. Then
   \begin{subequations}
    \begin{IEEEeqnarray}[\normalsize]{rCL}
  W_n(\mathbf{s}_\ordpiv)&- &W_n(\tilde{\mathbf{s}}_\ordpiv) \nonumber\\
  &=&\beta\!\!\!\displaystyle\sum_{\mathcal{E}\subseteq\mathcal{S}_\ordpiv^{K}}   
\!\!\!q(|\mathcal{E}|,K)\Big ( W_{n+1}([\mathrm{T}( 
\mathbf{s}_{\overline{\mathcal{E}}}), \mathbf{0}(|\mathcal{E}|)]) \nonumber\\
&&-W_{n+1}([\mathrm{T}( \tilde{\mathbf{s}}_{\overline{\mathcal{E}}})
\mathbf{0}(|\mathcal{E}|)])\Big)\label{seq:upb4_1}\\
   %
   %
   &\leq&  \Delta_{u}(s_j -\tilde{s}_j)  \beta  u(n+1)  \label{seq:upb4_2}\\
   &\leq&  \Delta_{u}(s_j -\tilde{s}_j)  \beta \!\!\! 
\displaystyle\sum_{i=0}^{T-n-1} (\beta(1-p))^i  \label{seq:upb4_3}\\
   %
   %
   %
   &\leq& \Delta_{u} (s_j -\tilde{s}_j)u(n),  \label{seq:upb4_4}
  \end{IEEEeqnarray}
    \end{subequations}
    
  where (\ref{seq:upb4_1}) follows since the value of the expected immediate 
rewards in TS $n$ are the same. The belief state vectors at TS $n+1$ are equal 
but for the belief state of node $j$, that is, $\tau(s_j)$ and 
$\tau(\tilde{s}_j)$ are the beliefs of node $j$ in $\mathrm{T}( 
\mathbf{s}_{\overline{\mathcal{E}}})$ and  
$\mathrm{T}(\tilde{\mathbf{s}}_{\overline{\mathcal{E}}})$, respectively. In 
(\ref{seq:upb4_1}), similarly to (\ref{seq:upb3_4}), (\ref{seq:upb3_4b}), and 
(\ref{seq:upb3_4c}), we apply the induction hypothesis, the contracting map 
property, and the fact that the summation is equal to one,  to obtain  
(\ref{seq:upb4_2}).  We use $\beta \leq 1$ and the definition of $u(n)$ to 
obtain (\ref{seq:upb4_3}) and \eqref{seq:upb4_4}, respectively. 
  
  \item Case $3$: $j \in \mathcal{S}_\ordpiv^{K}$ and $j \notin 
\tilde{\mathcal{S}}_\ordpiv^{K}$ in \eqref{seq:upb5_all}, i.e., there exists  $m\in 
\tilde{\mathcal{S}}_\ordpiv^{K}$  such that $s_j \geq s_m=\tilde{s}_m \geq 
\tilde{s}_j$ and that $m\notin \mathcal{S}_\ordpiv^{K}$. Hence, 
$\mathcal{S}_\ordpiv^{K}$ and $ \tilde{\mathcal{S}}_\ordpiv^{K}$ differ only in 
one element. To obtain (\ref{seq:upb5_2}) we use the symmetry property of the pseudo 
value function and the fact that the belief vectors are equal but for node $j$; 
in  (\ref{seq:upb5_3}) we add and subtract a pseudo value function, which has 
two nodes with the same belief state $s_m$, and one is scheduled while the other 
is not. We can group the pseudo value functions, and apply the results of Case 
$1$~and Case~$2$ above. In particular, for the pseudo value functions in the 
first line of (\ref{seq:upb5_3}), the belief vectors are equal but for $s_j$ and 
$s_m$, moreover $j\in\mathcal{S}_\ordpiv^{K}$ and $m \in 
\tilde{\mathcal{S}}_\ordpiv^{K}$, and $s_j \geq s_m$, so we can apply the 
results of Case $1$.  Similarly, for the two pseudo value functions in the 
second line of (\ref{seq:upb5_3}) we can use the results of Case~$2$.  
\begin{floateq}
    \begin{subequations}\label{seq:upb5_all}
    \begin{IEEEeqnarray}[\normalsize]{rcl}
  &W_n&(s_{\ordpiv(1)},\dots,s_j,\ldots,s_{\ordpiv(K)},s_m,\ldots,s_{\ordpiv(N)}) 
- W_n(\tilde{s}_{\ordpiv(1)},\dots,\tilde{s}_m,\tilde{s}_{\ordpiv(K+1)},\dots, 
\tilde{s}_j,\ldots,\tilde{s}_{\ordpiv(N)}) \label{seq:upb5_1}\nonumber \\
  &&=W_n(s_{\ordpiv(1)},\dots,\!s_j,\ldots,s_{\ordpiv(K)},\dots,s_m,\ldots,s_{
\ordpiv(N)})\! -\!\! W_n(s_{\ordpiv(1)},\dots,\!s_m,\ldots,s_{\ordpiv(K)},\dots, 
\tilde{s}_j,\ldots,s_{\ordpiv(N)}) \label{seq:upb5_2}\\  
&&=W_n(s_{\ordpiv(1)},\dots,\!s_j,\ldots,s_{\ordpiv(K)},\dots,\!s_m,\ldots,s_{
\ordpiv(N)})\!-\!\! 
W_n(s_{\ordpiv(1)},\dots,\!s_m,\ldots,s_{\ordpiv(K)},\dots,\!s_m,\ldots,s_{
\ordpiv(N)}) \nonumber \\
 && +W_n(s_{\ordpiv(1)},\dots,\!s_m,\ldots,s_{\ordpiv(K)},\dots,\!s_m,\ldots,s_{
\ordpiv(N)})\!-\!\! W_n(s_{\ordpiv(1)},\dots,\!s_m,\ldots,s_{\ordpiv(K)},\dots, 
\!\tilde{s}_j,\ldots,s_{\ordpiv(N)})\label{seq:upb5_3}\\
  && \leq \Delta_{u} (s_j-s_m) u(n) + \Delta_{u} (s_m-\tilde{s}_j)u(n)  
\label{seq:upb5_4}\\
   &&= \Delta_{u} (s_j-\tilde{s}_j) u(n). \label{seq:upb5_5} 
   \end{IEEEeqnarray}
    \end{subequations}
    \end{floateq}
\end{itemize}

\begin{floateq}
\begin{subequations} \label{eq:lem:swap}
\begin{IEEEeqnarray}[\normalsize]{rCL}
  &W_n&(\mathbf{s})- W_n(\mathbf{s}_\Pi) \nonumber \\
  &=& R(\mathbf{s}^K) -R(\mathbf{s}^K_\Pi) + 
\beta\displaystyle\!\!\!\!\!\sum_{\mathcal{E}\subseteq  
\mathcal{S}\backslash\{j\}}  \!\!\!\!q(|\mathcal{E}|,K\!-\!1)  \Big( p 
W_{n +1} \big([\mathrm{T}(\mathbf{s}_{\overline{\mathcal{E} \cup 
j}}),\mathbf{0}(|\mathcal{E}| \!+\!1) ] \big ) +(1-p) W_{n +1} 
\big([\mathrm{T}(\mathbf{s}_{\overline{\mathcal{E}}}),\mathbf{0}(|\mathcal{E}| ) 
]\big)  \nonumber \\
  && \qquad\qquad\qquad\qquad\qquad\qquad\qquad- p W_{n +1} \big([\mathrm{T}( 
\mathbf{s}_{\overline{\mathcal{E}\cup i}}),\mathbf{0}(|\mathcal{E}| 
\!+\!1)]\big)-(1-p) W_{n +1} 
\big([\mathrm{T}(\mathbf{s}_{\overline{\mathcal{E}}}),\mathbf{0}(|\mathcal{E}|)]
\big) \Big) \label{eq:lem:swap:1}\\
  &=& R(\mathbf{s}^K) -R(\mathbf{s}^K_\Pi) - 
p\beta\displaystyle\!\!\!\!\!\sum_{\mathcal{E}\subseteq 
\mathcal{S}\backslash\{j\}}   \!\! \!\!\!\! q(|\mathcal{E}|,K\!-\!1)  
\Big(W_{n +1} \big([\mathrm{T}(\mathbf{s}_{\overline{\mathcal{E}\cup 
i}}),\mathbf{0}(|\mathcal{E}| \!+\!1)]\big)\!-\!W_{n +1} 
\big([\mathrm{T}(\mathbf{s}_{\overline{\mathcal{E}\cup 
j}}),\mathbf{0}(|\mathcal{E}| \!+\!1) ] \big )  \Big) \label{eq:lem:swap:1b}\\
  &\geq&  \Delta_{l}(s_j-s_i) - 
p\beta\displaystyle\!\!\!\!\!\sum_{\mathcal{E}\subseteq 
\mathcal{S}\backslash\{j\}}  \!\! \!\! q(|\mathcal{E}|,K\!-\!1)  
\Big(W_{n +1} \big([\mathrm{T}(\mathbf{s}_{\overline{\mathcal{E}\cup 
i}}),\mathbf{0}(|\mathcal{E}| \!+\!1)]\big) -W_{n +1} 
\big([\mathrm{T}(\mathbf{s}_{\overline{\mathcal{E}\cup 
j}}),\mathbf{0}(|\mathcal{E}| \!+\!1) ] \big )  \Big) \label{eq:lem:swap:2} \\
  &\geq&  \Delta_{l}(s_j-s_i) - 
p\beta\displaystyle\!\!\!\!\!\sum_{\mathcal{E}\subseteq 
\mathcal{S}\backslash\{j\}}  \!\! \!\!  \Big( q(|\mathcal{E}|,K\!-\!1)  
\Delta_{u} (\tau(s_j)-\tau(s_i)) u(n+1)\Big ) \label{eq:lem:swap:3} \\
  &\geq&  \Delta_{l}(s_j-s_i) -  p\beta \Delta_{u} (s_j-s_i)u(n+1) 
\label{eq:lem:swap:4}  \\
  &\geq&  \Delta_{l}(s_j-s_i) -  p\beta \Delta_{u} (s_j-s_i)u(0) 
\label{eq:lem:swap:4b}  \\
  &=& (s_j-s_i)\left( \Delta_{l} -   
p\beta\frac{1-\beta(1-p)^{T+1}}{1-\beta(1-p)}\Delta_{u}\right) \geq 0 
\label{eq:lem:swap:5}
\end{IEEEeqnarray}
\end{subequations}
\end{floateq}
 
\section{}\label{app:diff}
We note that set $\mathcal{S}=\{1,\ldots,K\}$ is the set of $K$ nodes  scheduled 
from  $\mathbf{s}$, and that the set $\mathcal{S}_\Pi^{K}$ is the set of nodes 
scheduled from  $\mathbf{s}_\Pi$, that is, the first $K$ nodes as ordered 
according to permutation $\Pi$. We only need to study the cases in which 
$\mathcal{S}$ and $\mathcal{S}_\Pi^{K}$ are different, since the claim holds for 
the others due to the symmetric property of the pseudo value function, 
(\ref{eq:Wsymmetric}). We study the case $j \in \mathcal{S}$, $i \in 
\mathcal{S}_\Pi^{K}$, $i \notin \mathcal{S}$, and $j \notin 
\mathcal{S}_\Pi^{K}$ in \eqref{eq:lem:swap}. The summation in (\ref{eq:lem:swap:1}) is over all operative/inoperative 
combinations of the nodes in $\mathcal{S}\backslash \{j\}$. We denote the belief 
state of all nodes but those in $\mathcal{E}$ and $s_j$ by 
$\mathbf{s}_{\overline{\mathcal{E} \cup j}}$. The belief state of node $i$ in TS 
$n+1$, $\tau(s_i)$, is in vector $\mathrm{T}(\mathbf{s}_{\overline{\mathcal{E} 
\cup j}})$. Similarly, the belief state of node $j$ in TS $n+1$, $\tau(s_j)$, is 
in vector $\mathrm{T}(\mathbf{s}_{\overline{\mathcal{E} \cup i}})$. The second 
pseudo value functions in the first and second lines in (\ref{eq:lem:swap:1}) 
cancel out, and (\ref{eq:lem:swap:1b}) is obtained.  We have applied the 
decomposability and boundedness of $R(\cdot)$ to obtain (\ref{eq:lem:swap:2}). 
Belief vectors $\mathrm{T}(\mathbf{s}_{\overline{\mathcal{E}\cup j}})$ and 
$\mathrm{T}(\mathbf{s}_{\overline{\mathcal{E}\cup i}})$ in (\ref{eq:lem:swap:2}) 
are ordered and only differ in one element, $\tau(s_i)$ and $\tau(s_j)$, 
respectively, where  $\tau(s_i)\leq \tau(s_j)$, and hence, we use 
Lemma~\ref{lem:upb} to get (\ref{eq:lem:swap:3});  (\ref{eq:lem:swap:4}) follows 
since $\tau(\cdot)$ is a monotonically increasing contracting map, 
(\ref{eq:lem:swap:4b}) since $u(n)$ is decreasing in $n$; finally 
(\ref{eq:lem:swap:5}) follows since  $u(0)$ is the sum of a geometric series.

\section{}\label{app:case2}
We again use backward induction. Lemma~\ref{lem:case2} holds trivially for 
$n=T$. Note that in (\ref{eq:lem:case2:1}) the set of nodes scheduled in the 
pseudo value functions $W_n(\mathbf{s}_\barpi)$ and $W_n(\mathbf{s})$ are 
$\{1,\ldots,K-1,N\}$ and $\{1,\ldots,K\}$, respectively. That is, node $K$ is 
scheduled in $W_n(\mathbf{s})$, but not in $W_n(\mathbf{s}_\barpi)$; and node 
$N$ is scheduled in $W_n(\mathbf{s}_\barpi)$, but not in  $W_n(\mathbf{s})$.  To 
prove that (\ref{eq:lem:case2:1}) holds at TS $n$ we use a sample path argument 
similarly to \cite{RMAB:Ahmad2009a}, and assume that the realizations of the EH 
processes of nodes $K$ and  $N$ are either $0$ or $1$. There are four different 
cases, but here we only consider one, since the others follow similarly.  

We consider the case in which the EH processes have realizations $E^s_K(n)=1$ 
and $E^s_N(n)=0$.  We denote by $\mathcal{K}=\{1,\ldots,K-1\}$ the set of   
nodes scheduled in both sides of (\ref{eq:lem:case2:1}). If $\mathcal{E}$ is the 
set of active nodes,  we denote the set of nodes in $\mathcal{K}$ that remain 
idle by $\mathcal{K}^i=\mathcal{K}\backslash \mathcal{E}$. We denote  the nodes 
that are not scheduled in either side of (\ref{eq:lem:case2:1}) by 
$\mathcal{K}^s=\overline{\mathcal{K}\cup \{K, N\}}$.  We denote the set 
$\{0,1\}^{|\mathcal{E}|}$ by $\mathcal{B}^{|\mathcal{E}|}$. From the left hand 
side of (\ref{eq:lem:case2:1}) we obtain \eqref{seq_case2_part1}, 
where in  \eqref{seq_case2_part1_si} we have applied the induction hypothesis of 
(\ref{eq:lem:case2:1}), the symmetry of the pseudo value function, the 
inequality $p_{11} \geq p_{00}$, and the definition of  $R(\cdot)$. This 
concludes the proof of (\ref{eq:lem:case2:1}). 
\begin{floateq} 
\begin{subequations}\label{seq_case2_part1}
\begin{IEEEeqnarray}[\normalsize]{rCL} 
1&+&W_n(s_N,s_1,\ldots, s_{N-1}) \nonumber\\
&=& 1+R(\mathbf{s}_\barpi^K)+\beta\!\!\displaystyle\sum_{\mathcal{E} \subseteq 
\mathcal{K}}\!\!\!\!\!\!\!\displaystyle\sum_{~~~~l_\mathcal{E}\in 
\mathcal{B}^{|\mathcal{E}|}}\!\!\!\!\!\!h(l_\mathcal{E},K\!-\!1) \Big [p W_{n+1} 
\left (\mathbf{P}_{11}\left (\Sigma l_\mathcal{E} \right ),  
\pmb{\tau}(\mathbf{s}_{\mathcal{K}^i} ), s_K= p_{11},  
\pmb{\tau}(\mathbf{s}_{\mathcal{K}^s} 
),s_N=p_{01},\mathbf{P}_{01}\left(\overline{\Sigma} l_\mathcal{E}\right ) \right 
) \nonumber\\
&&\qquad\qquad\qquad\qquad+(1-p)W_{n+1} \left (\mathbf{P}_{11}\left (\Sigma 
l_\mathcal{E} \right ), s_N=p_{01}, \pmb{\tau}(\mathbf{s}_{\mathcal{K}^i} ), 
s_K=p_{11},  \pmb{\tau}(\mathbf{s}_{\mathcal{K}^s}),   
\mathbf{P}_{01}\left(\overline{\Sigma} l_\mathcal{E}\right ) \right ) \Big] \\
&\geq&  p+R(\mathbf{s}_\barpi^K)+\beta\!\!\displaystyle\sum_{\mathcal{E} 
\subseteq \mathcal{K}}\!\!\!\!\!\!\!\displaystyle\sum_{~~~~l_\mathcal{E}\in 
\mathcal{B}^{|\mathcal{E}|}}\!\!\!\!\!\!h(l_\mathcal{E},K\!-\!1) \Big [p W_{n+1} 
\left (\mathbf{P}_{11}\left (\Sigma l_\mathcal{E} \right ),  
\pmb{\tau}(\mathbf{s}_{\mathcal{K}^i} ),s_K= p_{11},  
\pmb{\tau}(\mathbf{s}_{\mathcal{K}^s} 
),s_N=p_{01},\mathbf{P}_{01}\left(\overline{\Sigma} l_\mathcal{E}\right ) \right 
) \nonumber\\
&&\qquad\qquad\qquad\qquad+(1-p)\Big(1+W_{n+1} \left (\mathbf{P}_{11}\left 
(\Sigma l_\mathcal{E} \right ), s_N=p_{01}, 
\pmb{\tau}(\mathbf{s}_{\mathcal{K}^i} ), s_K=p_{11},  
\pmb{\tau}(\mathbf{s}_{\mathcal{K}^s}),   \mathbf{P}_{01}\left(\overline{\Sigma} 
l_\mathcal{E}\right ) \Big )\right ) \Big] \\
&\geq& p+R(\mathbf{s}_\barpi^K)+\beta\!\!\displaystyle\sum_{\mathcal{E} 
\subseteq \mathcal{K}}\!\!\!\!\!\!\!\displaystyle\sum_{~~~~l_\mathcal{E}\in 
\mathcal{B}^{|\mathcal{E}|}}\!\!\!\!\!\!h(l_\mathcal{E},K\!-\! 1) \Big [p 
W_{n+1} \left (\mathbf{P}_{11}\left (\Sigma l_\mathcal{E} \right ),  
\pmb{\tau}(\mathbf{s}_{\mathcal{K}^i} ),s_K= p_{11},  
\pmb{\tau}(\mathbf{s}_{\mathcal{K}^s} 
),s_N=p_{01},\mathbf{P}_{01}\left(\overline{\Sigma} l_\mathcal{E}\right ) \right 
) \nonumber\\
&&\qquad\qquad\qquad\qquad+(1-p)W_{n+1} \left (\mathbf{P}_{11}\left (\Sigma 
l_\mathcal{E} \right ), \pmb{\tau}(\mathbf{s}_{\mathcal{K}^i} ), s_K=p_{11},  
\pmb{\tau}(\mathbf{s}_{\mathcal{K}^s}),   \mathbf{P}_{01}\left(\overline{\Sigma} 
l_\mathcal{E}\right ), s_N= p_{01}\right ) \Big] \label{seq_case2_part1_si}\\
&=& R(\mathbf{s}^K)+\beta\!\!\displaystyle\sum_{\mathcal{E} \subseteq 
\mathcal{K}}\!\!\!\!\!\!\!\displaystyle\sum_{~~~~l_\mathcal{E}\in 
\mathcal{B}^{|\mathcal{E}|}}\!\!\!\!\!\! h(l_\mathcal{E},K\!-\!1) \Big [p 
W_{n+1} \left (\mathbf{P}_{11}\left (\Sigma l_\mathcal{E} \right ), s_K=p_{11},  
\pmb{\tau}(\mathbf{s}_{\mathcal{K}^i} ),  \pmb{\tau}(\mathbf{s}_{\mathcal{K}^s} 
),s_N=p_{01},\mathbf{P}_{01}\left(\overline{\Sigma} l_\mathcal{E}\right ) \right 
) \nonumber\\
&&\qquad\qquad\qquad\qquad+(1-p)W_{n+1} \left (\mathbf{P}_{11}\left (\Sigma 
l_\mathcal{E} \right ),\pmb{\tau}(\mathbf{s}_{\mathcal{K}^i } ), s_K=p_{11},   
\pmb{\tau}(\mathbf{s}_{\mathcal{K}^s}), s_N=p_{01},   
\mathbf{P}_{01}\left(\overline{\Sigma} l_\mathcal{E}\right )\right ) \Big] \\
&=&W_n(\mathbf{s}) 
 \end{IEEEeqnarray}
 \end{subequations}
 \end{floateq}
\begin{floateq} 
\begin{subequations}\label{seq_case2_part2}
\begin{IEEEeqnarray}[\normalsize]{rCL}W_n(\tilde{\mathbf{s}}) &=& 
R(\tilde{\mathbf{s}}^K)+\beta\!\!\displaystyle\sum_{\mathcal{E} \subseteq 
\mathcal{K}}\!\!\!\!\!\!\!\!\displaystyle\sum_{~~~~l_\mathcal{E}\in 
\mathcal{B}^{|\mathcal{E}|}}\!\!\!\!\!\!h(l_\mathcal{E},K\!-\!1) \Big [
p W_{n+1} \left (\mathbf{P}_{11}\left (\Sigma l_\mathcal{E} \right ), 
s_j=p_{11},  \pmb{\tau}(\mathbf{s}_{\mathcal{K}^i} ),  
\pmb{\tau}(\mathbf{s}_{\mathcal{K}^s\cup i} 
),\mathbf{P}_{01}\left(\overline{\Sigma} l_\mathcal{E}\right ) \right ) 
\nonumber\\
&&\qquad\qquad \qquad \qquad\qquad \qquad+(1-p)W_{n+1} \left 
(\mathbf{P}_{11}\left (\Sigma l_\mathcal{E} \right ), 
\pmb{\tau}(\mathbf{s}_{\mathcal{K}^i \cup j} ),  
\pmb{\tau}(\mathbf{s}_{\mathcal{K}^s\cup i}),   
\mathbf{P}_{01}\left(\overline{\Sigma} l_\mathcal{E}\right ) \right ) \Big]  \\
&\geq& 
R(\tilde{\mathbf{s}}^K)-p+\beta\!\!\displaystyle\sum_{\mathcal{E} \subseteq 
\mathcal{K}}\!\!\!\!\!\!\!\!\displaystyle\sum_{~~~~l_\mathcal{E}\in 
\mathcal{B}^{|\mathcal{E}|}}\!\!\!\!\!\!h(l_\mathcal{E},K\!-\!1) \Big [p \Big ( 
1+ W_{n+1} \left (\mathbf{P}_{11}\left (\Sigma l_\mathcal{E} \right ), 
s_i=p_{01},  \pmb{\tau}(\mathbf{s}_{\mathcal{K}^i} ),  
\pmb{\tau}(\mathbf{s}_{\mathcal{K}^s\cup j} 
),\mathbf{P}_{01}\left(\overline{\Sigma} l_\mathcal{E}\right ) \right ) 
\Big)\nonumber\\
&&\qquad\qquad \qquad \qquad \qquad\qquad+(1-p)W_{n+1} \left 
(\mathbf{P}_{11}\left (\Sigma l_\mathcal{E} \right ), 
\pmb{\tau}(\mathbf{s}_{\mathcal{K}^i \cup i} ),  
\pmb{\tau}(\mathbf{s}_{\mathcal{K}^s\cup j}),   
\mathbf{P}_{01}\left(\overline{\Sigma} l_\mathcal{E}\right ) \right ) \Big] \label{seq_case2_part2_i1}\\
&\geq&  
R(\tilde{\mathbf{s}}^K)-p+\beta\!\!\displaystyle\sum_{\mathcal{E} \subseteq 
\mathcal{K}}\!\!\!\!\!\!\!\!\displaystyle\sum_{~~~~l_\mathcal{E}\in 
\mathcal{B}^{|\mathcal{E}|}}\!\!\!\!\!\!h(l_\mathcal{E},K\!-\!1) \Big [p W_{n+1} 
\left (\mathbf{P}_{11}\left (\Sigma l_\mathcal{E} \right ),  
\pmb{\tau}(\mathbf{s}_{\mathcal{K}^i} ),  
\pmb{\tau}(\mathbf{s}_{\mathcal{K}^s\cup j} 
),\mathbf{P}_{01}\left(\overline{\Sigma} l_\mathcal{E}\right ), s_i=p_{01} 
\right ) \nonumber\\
&&\qquad\qquad\qquad \qquad \qquad \qquad+(1-p)W_{n+1} \left 
(\mathbf{P}_{11}\left (\Sigma l_\mathcal{E} \right ), 
\pmb{\tau}(\mathbf{s}_{\mathcal{K}^i \cup i} ),  
\pmb{\tau}(\mathbf{s}_{\mathcal{K}^s\cup j}),   
\mathbf{P}_{01}\left(\overline{\Sigma} l_\mathcal{E} \right ) \right ) \Big] \label{seq_case2_part2_i2}\\
&=&W_n(\tilde{\mathbf{s}}_\Pi)
 \end{IEEEeqnarray}
 \end{subequations}
 \end{floateq}

Now we prove the second part of Lemma~\ref{lem:case2}, (\ref{eq:lem:case2:2}). 
There are three cases:
\begin{itemize}
 \item Case $1$: $j,i \leq K$, i.e.,  nodes $j$ and $i$ are scheduled on both 
sides of (\ref{eq:lem:case2:2}). The inequality holds since the pseudo value 
function is symmetric. 
 \item Case $2$: $j\leq K$ and $i> K$ in \eqref{seq_case2_part2}, i.e.,  nodes $i$ and $j$ are scheduled on 
the left and right hand sides of (\ref{eq:lem:case2:2}), respectively. To prove 
the inequality we use the linearity of the pseudo value function 
(\ref{eq:pseudo_linprop}). Since $s_j\geq s_i$, using (\ref{eq:pseudo_linprop}), 
we only need to prove that $W_n(s_1,\ldots, 1, \ldots, 0, \ldots, 
s_N)-W_n(s_1,\ldots, 0, \ldots, 1, \ldots, s_N)\geq 0$. 
  We denote  the scheduled nodes in both sides of (\ref{eq:lem:case2:2}) by 
$\mathcal{K}=\{1,\ldots,K\}\backslash \{j\}$,  the set of nodes in $\mathcal{K}$ 
that remain idle by $\mathcal{K}^i=\mathcal{K}\backslash \mathcal{E}$, and  the 
nodes that are not scheduled in either side of (\ref{eq:lem:case2:2}) by 
$\mathcal{K}^s=\overline{\mathcal{K}\cup\{j,i\}}$.  We denote the belief vector 
$(s_1,\ldots, s_j=1, \ldots, s_i=0, \ldots, s_{N})$ by $\tilde{\mathbf{s}}$, its $i,j$-swap by $\tilde{\mathbf{s}}_\Pi$, and define $\tilde{\mathbf{s}}^K \triangleq (\tilde{s}_1,\ldots,\tilde{s}_K)$. In \eqref{seq_case2_part2} have used the 
induction hypothesis of (\ref{eq:lem:case2:2}) and (\ref{eq:lem:case2:1}) in \eqref{seq_case2_part2_i1}
 and \eqref{seq_case2_part2_i2}, respectively, and the fact that $\beta \leq 1$. 
\item Case $3$: nodes $s_j$ and $s_i$ are not scheduled. Inequality  holds in 
this case, by applying the definition of (\ref{eq:pseudo_val_case2}) and the 
induction hypothesis of (\ref{eq:lem:case2:2}).

\end{itemize}
\end{appendices}
\bibliographystyle{IEEEtran} 
 \bibliography{IEEEabrv,Totabiblio}

\begin{IEEEbiography}[{\includegraphics[width=1in,height=1.25in,clip,
keepaspectratio]{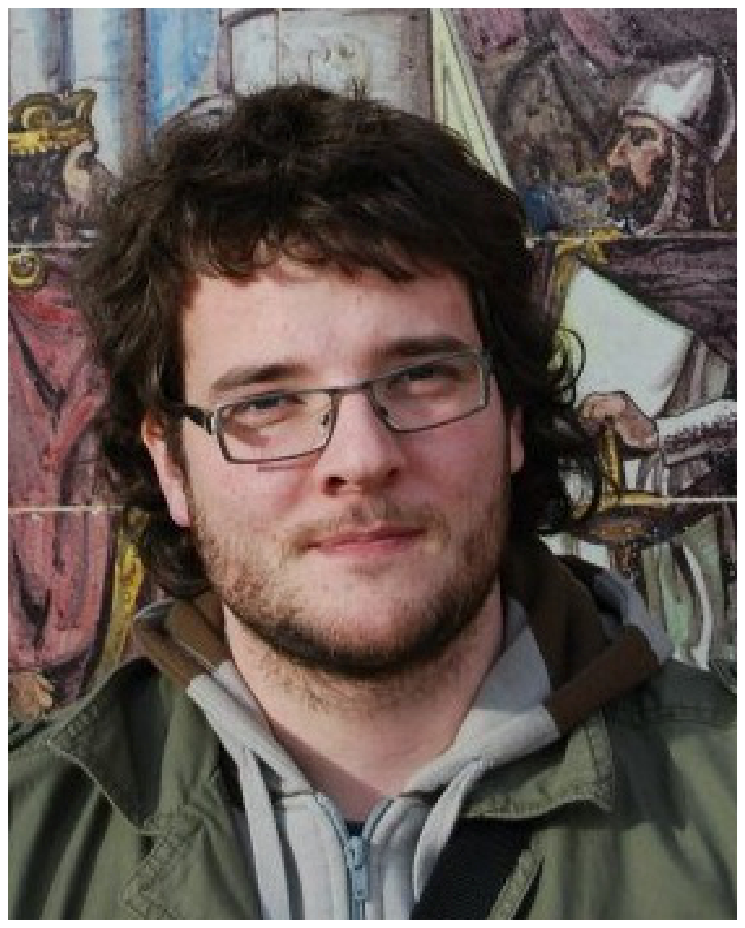}}]{Pol Blasco}
received the B.Eng. degree from Technische Universit\"{a}t Darmstadt, Germany, and 
BarcelonaTech (formally UPC), Spain, in 2008 and 2009, respectively, the M. 
S. degree from BarcelonaTech, in 2011, and the Ph.D. degree in 
electrical engineering from Imperial College London, in 2014. He was research 
assistant at CTTC in Barcelona, Spain,  from November 2009 until August 2013. He 
was a visiting scholar in the Centre for Wireless Communication in University of 
Oulu, Finland, during the last semester of 2011, and to Imperial College during the first half of 2013. In 2008 he pursued the B.Eng. 
thesis in European Space Operation Center in the OPS-GSS section, Darmstadt, 
Germany. He also has carried on research in neuroscience in IDIBAPS, Barcelona, 
Spain, and in the Technische Universit\"{a}t Darmstadt in collaboration with the 
 Max-Planck-Institut, Frankfurt, Germany, in 2009 and 2008, respectively. His 
current research interest cover communication of energy harvesting devices, 
cognitive radio, machine learning, control theory, decision making, and 
neuroscience.
\end{IEEEbiography}

\begin{IEEEbiography}[{\includegraphics[width=1in,height=1.25in,clip,
keepaspectratio]{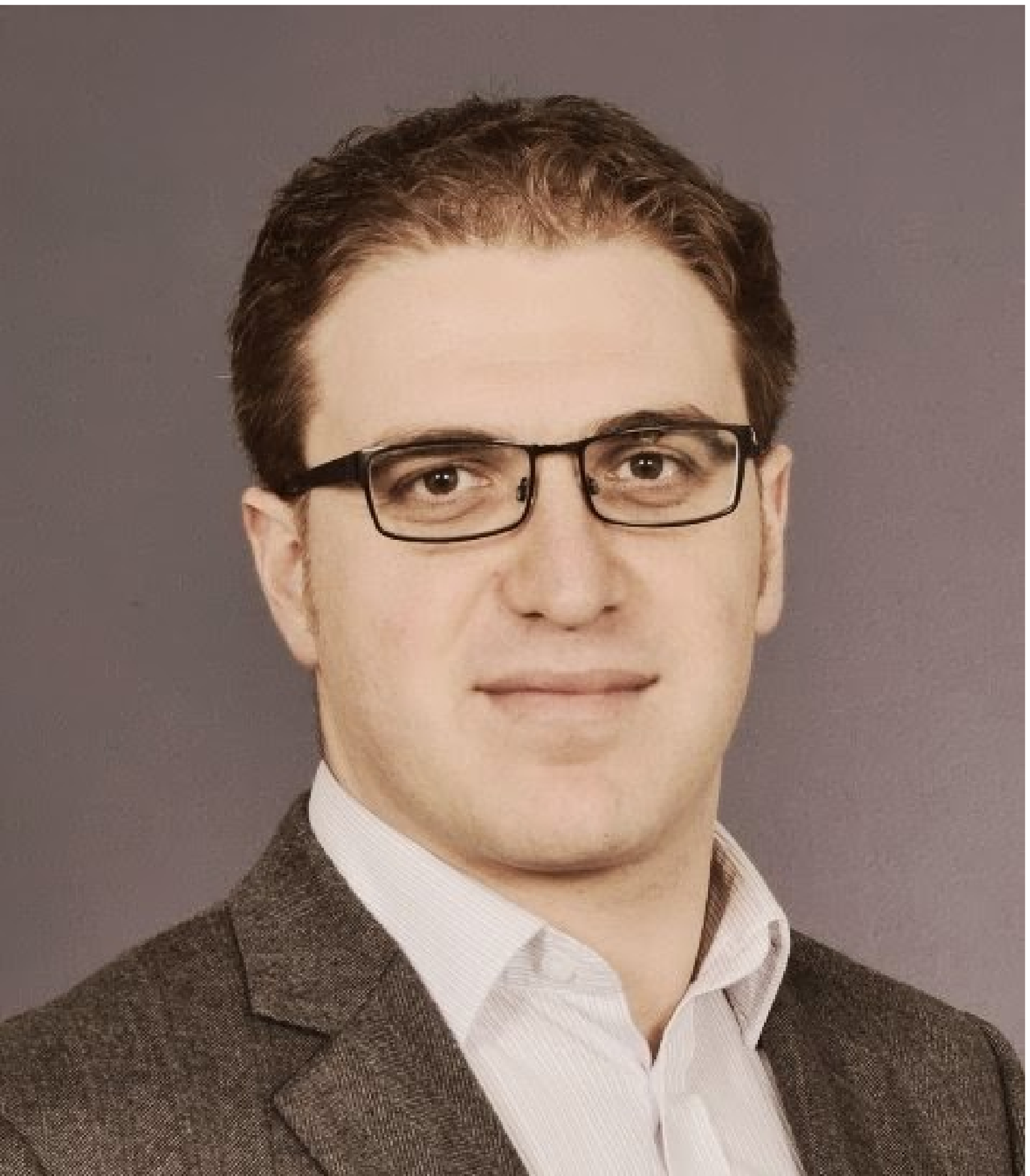}}]{Deniz G\"{u}nd\"{u}z} (S'03-M'08-SM'13) received the M.S. 
and Ph.D. degrees in electrical engineering from NYU Polytechnic School of 
Engineering in 2004 and 2007, respectively. After his PhD he served as a 
postdoctoral research associate at Princeton University, and as a consulting assistant professor at Stanford University. Since September 2012 he is a 
Lecturer in the Electrical and Electronic Engineering Department of Imperial 
College London. Previously he was a research associate at CTTC in Barcelona, 
Spain. He also held a visiting researcher position at Princeton University from 
November 2009 until November 2011.

Dr. Gunduz is an Associate Editor of the IEEE TRANSACTIONS ON COMMUNICATIONS. He is serving as a co-chair of the IEEE Information Theory Society Student Committee, and a co-director of the Imperial College Probability Center. He has served as the co-chair of the Network Theory Symposium at the 2013 and 2014 IEEE Global Conference on Signal and Information Processing (GlobalSIP), and he was a co-chair of the 2012 IEEE European School of 
Information Theory (ESIT). His research interests lie in the areas of 
communication theory and information theory with special emphasis on joint 
source-channel coding, multi-user networks, energy efficient communications and 
privacy.
\end{IEEEbiography}

\end{document}